\def\eps{\epsilon}
\def\suchthat{\;:\;}
\def\given{\;|\;}
\def\err{\mathrm{err}}
\def\ind{\mathrm{ind}}
\newcommand{\abs}[1]{\left|#1\right|}
\newcommand{\norm}[1]{\left\|#1\right\|}
\newcommand{\prob}[1]{\operatorname{Pr}\left(#1\right)}
\newcommand{\size}[1]{\left|#1\right|}
\newcommand{\linspan}[1]{\operatorname{span}\left(#1\right)}
\newcommand{\expect}[2]{\underset{#1}{\operatorname{E}}\left[#2\right]}
\newcommand{\expectilde}[2]{\underset{#1}{\operatorname{\tilde{E}}}\left[#2\right]}
\newcommand{\nnz}{\mathrm{nnz}}
\newcommand{\poly}{\mathrm{poly}}
\newcommand{\R}{\mathbb{R}}
\title{On Subspace Approximation and Subset Selection in Fewer Passes by MCMC Sampling} 
\titlerunning{} 
\author{Amit Deshpande}{Microsoft Research, Bangalore, India }{amitdesh@microsoft.com}{}{(Optional) }
\author{Rameshwar Pratap}{IIT Mandi, H.P., India}{rameshwar.pratap@gmail.com}{[orcid]}{[funding]}
\authorrunning{Deshpande and Pratap} 
\keywords{Low-rank approximation, dimensionality reduction, sketching, sampling.} 
\begin{document}

\maketitle

\begin{abstract}
We consider the problem of subset selection for $\ell_{p}$ subspace approximation, i.e., given $n$ points in $d$ dimensions, we need to pick a small, representative subset of the given points such that its span gives $(1+\eps)$ approximation to the best $k$-dimensional subspace that minimizes the sum of $p$-th powers of distances of all the points to this subspace. Sampling-based subset selection techniques require adaptive sampling iterations with multiple passes over the data. Matrix sketching techniques give a single-pass $(1+\eps)$ approximation for $\ell_{p}$ subspace approximation but require additional passes for subset selection.

In this work, we propose an MCMC algorithm to reduce the number of passes required by previous subset selection algorithms based on adaptive sampling. For $p=2$, our algorithm gives subset selection of nearly optimal size in only $2$ passes, whereas the number of passes required in previous work depend on $k$. Our algorithm picks a subset of size $\poly(k/\eps)$ that gives $(1+\eps)$ approximation to the optimal subspace.  The running time of the algorithm is $nd + d~ \poly(k/\eps)$. We extend our results to the case when outliers are present in the datasets, and suggest  a two pass algorithm for the same. Our ideas also extend to give a reduction in the number of passes required by adaptive sampling algorithms for $\ell_{p}$ subspace approximation and subset selection, for $p \geq 2$.

\end{abstract}

\section{Introduction}












Computing subspace approximation of large, high-dimensional input data is one of the most fundamental problems in data science and randomized numerical linear algebra. Given a dataset $\mathcal{X}=\{x_i\}_{i=1}^n: x_i \in \R^d$; where $n$ and $d$ are large, a positive integer $1<k\ll d$, and $1 \leq p < \infty$, the problem of $\ell_{p}$ subspace approximation is to find a $k$ dimensional linear subspace $V$ of $\R^{d}$ that minimizes the sum of $p$-th powers of the distances of all the points in the dataset to the subspace $V$, that is,
\[
 \err_{p}(\mathcal{X}, V):=\sum_{i=1}^{n} d(x_{i}, V)^{p}.
\]
This subspace approximation problem can also be seen as a dimensionality reduction as it compresses the data dimension. The subspace approximation problem for  $p=2$ (also known as low-rank matrix approximation) is well studied and can be solved exactly in time $O(\min \{n^2d, nd^2\})$ using the Singular value decomposition (SVD). However, computing exact SVD may not be practical when $n$ and $d$ are large. To overcome this, several faster algorithms have been proposed that closely approximate the optimal results obtained via SVD decomposition. We mention a few such notable results as follows. Frieze, Kannan, and Vempala~\cite{FKV} suggest computing low-rank approximation that gives an \textit{additive} approximation in time $O(nd\cdot \poly(k, 1/\eps))$ by sampling a subset of points with probability proportional to their squared lengths. This was later improved in~\cite{DBLP:conf/approx/DeshpandeV06,DeshpandeRVW06,DBLP:conf/focs/DeshpandeR10} to give a \textit{multiplicative} approximation guarantee by modifying and generalizing the squared-length sampling to sample multiple points adaptively over multiple passes. This line of work~\cite{FKV,DBLP:conf/approx/DeshpandeV06,DeshpandeRVW06,DBLP:conf/focs/DeshpandeR10} sample $\poly((k/\eps)^p)$ points, with the guarantee that their span contains a $k$-dimensional subspace that gives $(1+\eps)$ approximation to the optimum, with high probability. These sampling results are also known as \textit{column subset selection} for low-rank approximation.  Low-rank approximation based on row and column subset selection is more interpretable and advantageous as argued in~\cite{DrineasMM08,MahoneyD09,WangZ13,mahoney2011randomized,pmlr-v119-ida20a}. However, a limitation of most row and column subset selection algorithms that give multiple approximation guarantee is that they require multiple passes over the input matrix to perform multiple adaptive rounds of sampling, which can be impractical when $n$ and $d$ are large.
 
Another line of work follows deterministic sketching based techniques~\cite{Liberty13,GhashamiP14,GhashamiLPW16,CormodeDW18}, requires one passes over the data, and offers multiplicative approximation guarantee. The work due to~\cite{Sarlos06, ClarksonW13} gives randomized sketching for multiplicative approximation guarantee. However, a limitation of these results is that they don’t provide subset selection.
In this work, we focus on the problem of achieving a multiplicative approximation guarantee for $\ell_p$ norm via subset selection and simultaneously aim to minimize the number of passes over the input required by the adaptive sampling. We summarise our contributions as follows:

\subsection{Our results:} Our main contribution lies in minimizing the number of rounds required by the adaptive sampling to sample a subset of points whose span contains a $k$ dimensional subspace that offer multiplicative approximation for the $\ell_p$ error.  Our main ingredient is a carefully designed MCMC sampling distribution that ensures that the implied probability of sampling a point using MCMC distribution is sufficiently close to that of adaptive sampling distribution, and as a consequence, the $\ell_p$ error corresponding to these two sampling distributions is sufficiently close. We summarise our key contribution as follows:

\begin{itemize}
\item For $p=2$, whereas previous subset selection for nearly optimal approximation required $O(k\log k)$ passes by adaptive sampling \cite{DBLP:conf/approx/DeshpandeV06} or $O(\log k)$ passes by a combination of approximate volume sampling and adaptive sampling \cite{DBLP:conf/focs/DeshpandeR10}, our MCMC sampling algorithm can achieve the same in only $2$ passes over the data. We sample a set of $\poly(k/\eps)$ points whose span contains a $k$ dimensional subspace that offers $(1+\eps)$-multiplicative approximation to the optimal. The running time of our algorithm is $nd+d\cdot\poly(k/\eps)$. 
\item For $p=2$, our results generalize to subspace approximation with outliers, and offer a similar multiplicative approximation guarantee. This is an improvement over the recently proposed algorithms~\cite{DBLP:journals/tcs/DeshpandeP21,DeshpandeP20} by reducing their number of passes from $O(k \log k)$ to $2$.
\item Our ideas extend to $\ell_{p}$ subspace approximation, for $p \geq 2$, and give a significant reduction in the number of passes required by adaptive sampling algorithms for subspace approximation and subset selection.
\end{itemize}

\section{Previous work}
We broadly split the baselines to compare with our proposal in the following  subsections. 
\subsection{Subset selection}




We first suggest baselines for $p=2$. 
 Frieze, Kannan, and Vempala~\cite{FKV} suggest a sampling algorithm that picks points proportional to their squared norm (called \textit{squared-length sampling}) and offers the guarantee that the span of $\poly(k/\eps)$ sampled points contains a $k$-dimensional subspace which offers additive approximation guarantee to the optimal $k$-dimensional subspace. However, their additive error is essentially $\eps$ times the Frobenius norm of input points, which can be arbitrarily large as compared to the true error. This result was strengthened by Deshpande and Vempala~\cite{DBLP:conf/approx/DeshpandeV06} by following squared-length sampling in an adaptive manner. They show that a subset of $\poly(k/\eps)$ points sampled using their distribution gives a multiplicative approximation guarantee. However, a limitation of their approach is that it requires taking $O(k\log k)$ passes over the input to generate the sampling distribution. Another way to achieve the multiplicative approximation guarantee was proposed by doing \textit{volume sampling}~\cite{DeshpandeRVW06}, and then following adaptive sampling in $O(k \log k)$ rounds~\cite{DBLP:conf/approx/DeshpandeV06}. The running time of their algorithm is $O(\nnz(\mathcal{X})\cdot k/\eps)$. The result due to Guruswami \textit{et. al.}
~\cite{DBLP:conf/soda/GuruswamiS12} show that sampling $O(k/\eps)$ points by volume sampling gives a bi-criteria $(1+\eps)$ approximation to the optimal subspace. 
 The result due to~\cite{DeshpandeV07} extends the above for $p\neq 2$, and suggests multiplicative subspace approximation. They first perform \textit{approximate volume sampling} that gives $k!(k+1)$ multiplicative guarantee. Then they follow  an adaptive sampling of $O(k\log k)$ rounds, and   sample a subset of $\tilde{O}(k^2 (k/\eps)^{p+1})$ with the guarantee that their span gives $(1+\eps)$ approximation, with high probability. Their running time is $\tilde{O}(nd\cdot k^3 (k/\eps)^{p+1})$. The first step of ~\cite{DeshpandeV07} mentioned above was recently improved to to $(k+1)$ by \cite{ChierichettiG0L17}. This was further strengthened  to $O((k+1)^{1/p})$ for $p\in(1, 2)$ and $O({(k+1)}^{1-1/p})$ for $p\geq 2$ due to~\cite{DanWZZR19}. Of course, here also, adaptive sampling of {$O(k \log k)$} rounds need to perform in order to achieve  $(1+\eps)$ multiplicative approximation. An advantage of these results is that they perform subset selection for subspace approximation, whereas the downside is that they require a large number of passes over the input to offer the multiplicative approximation guarantee.

\subsection{Frequent directions and sketching}




Liberty's deterministic matrix sketching \cite{Liberty13} suggests additive rank $k$ approximation guarantee by taking  $1$-pass over the data stream. A subsequent work due to Ghashami \textit{et.al.}\cite{GhashamiP14} provide a faster, deterministic algorithm that runs in $O\left(nd \cdot \text{poly}(k/\eps)\right)$ time and gives a multiplicative $(1+\eps)$-approximation to the optimum. A faster algorithm with running time $O(nd\cdot (k/\eps)^{2})$ time that offers the same guarantee was provided in~\cite{GhashamiLPW16}. An advantage of these results is that they are deterministic and require only one pass over the data. However, a major limitation of these results is that they work only for $p=2$, can not perform subset selection, and also do not extend when outliers are present in the datasets. Recently proposed work~\cite{CormodeDW18} extends the above mentioned deterministic sketching results for $p\neq 2$. Their Theorem $4.2$ suggests a one pass deterministic sketching for the $\ell_p$ (with $p\neq 2, \infty$) subspace approximation in streaming settings.  The running time of their algorithm is  $\poly(n^{\gamma}, d)$ with the update time $O(n^{\gamma}d)$, and offers $1/d^{\gamma}$ approximation factor, where $\gamma \in (0, 1)$ is a constant. Again limitation of this result is that they do not provide subset selection. Further Theorem $5.1$ of their result suggest an $\ell_1$ subspace approximation algorithm that offer $\poly(k)$ approximation factor, with update time $\poly(n, d)$ and space $n^{\gamma}\poly(d)$. The $\ell_1$ distance is known to be robust to the outliers and these results can potentially be used to give $\ell_p$ subspace approximation with outliers. However, they don’t provide any concrete theoretical guarantee on this statement. Whereas our results extend to $\ell_p$ subspace approximation with outliers under the assumption that error in inliers over $\ell_p$ norm is at least a constant fraction of $\ell_p$ error over all the points. 

There is another line of work based on the randomized sketching based techniques. Exploiting the random projection-based techniques~\cite{Sarlos06} suggests $(1+\eps)$ multiplicative approximation in running time $O(\nnz(X) \cdot \text{poly}(k/\eps))$. This was later improved by Clarkson and Woodruff~\cite{ClarksonW13} that offers the same multiplicative guarantee in a  faster running time  $O(\text{nnz}(X) + (n+d)\cdot \text{poly}(k/\eps))$. Again the advantage of these results is that they require only one pass over the input but they can not perform subset selection, and also not known to be robust. 

\subsection{Robust subspace approximation}
Our subspace approximation with outliers results can be seen as an improvement over the work due to Deshpande and Pratap~\cite{DBLP:journals/tcs/DeshpandeP21,DeshpandeP20} in the sense that it requires the lesser number passes over the dataset to offer multiplicative approximation guarantee over the inliers. For $p=2$, they require $O(k\log k)$ passes over the datasets whereas we require only two number of  passes over the input. Both these results require the assumption on the datasets that $\ell_p$ error over inliers is at least a constant fraction of $\ell_p$ error over all the points (see Equation~\eqref{eq:assumption}). Bhaskara and Kumar~\cite{BK2018} suggest a bicriteria approximation algorithm for subspace space approximation with outliers for $p=2$. However, their results require a stronger assumption on the datasets called \textit{rank-$k$ condition}, discard a large number of outliers than the optimal solution, and require an initial guess on the optimal error over inliers for their algorithm. Moreover, they don’t provide subset selection. 
Hardt and Moitra \cite{HardtM13} considered a related problem called \textit{robust subspace recovery} and gave an efficient algorithm for the problem  under a strong assumption on the data that requires any $d$ or fewer outliers to be linearly independent. 

The MCMC sampling has also been explored in the context of $k$-means clustering. The $D^2$-sampling proposed by Arthur and Vassilvitskii~\cite{k-meanspp} adaptively samples $k$ points -- one point in each passes over the input, and the sampled points give $O(\log k)$ approximation with respect to the optimal clustering solution. The results due to~\cite{DBLP:conf/nips/BachemLH016, BachemLHK16} suggest generating MCMC sampling distribution by taking only one pass over the input that closely approximates the desired $D^2$ sampling distribution, and offer close to the optimal clustering solution.

\section{Background}\label{sec:background}
\noindent\textbf{$\ell_{p}$ subspace approximation with outliers:} Given a set of points $\mathcal{X}=\{x_{i}\}_{i=1}^n \in \R^{d}$, an integer $1 \leq k \leq d$, $1 \leq p < \infty$, and an upper bound on the fraction of outliers $0 \leq \beta \leq 1$, the problem is to find a $k$-dimensional linear subspace $V$ that minimizes the sum of $p$-th powers of distances of the $(1-\beta)n$ points nearest to it. If $N_{\beta}(V) \subseteq [n]$ denotes the set  of the indices of the nearest $(1-\beta)n$ points to $V$ among $x_{1}, x_{2}, \dotsc, x_{n}$, then  we want to minimize the following:
$
\sum_{i \in N_{\beta}(V)} d(x_{i}, V)^{p}.
$

\section{Subspace approximation in fewer passes by MCMC sampling}


We present the pseudocode of our algorithm in Algorithm \ref{alg:MCMC_sampling_lp}. Our algorithm starts with an initial subset $S_{0}$ and then uses it as a \emph{pivot} subset to approximate subsequent iterations of adaptive sampling in a single pass by an MCMC sampling procedure. In other words, $l$ iterations of adaptive sampling, with $t$ i.i.d. points to be picked in each iteration, requires $l$ passes over the data $\mathcal{X}$ to update the set w.r.t. which we need to do adaptive sampling. Our MCMC sampling algorithm uses a pivot subset and a random walk to approximate the distribution over $l$ iterations using only a single pass, for any $l$. Using the right choice of parameters $t, l$ and the length of the random walk $m$, we can do subset selection of near-optimal size for $\ell_{p}$ subspace approximation without requiring a large number of passes over the data. The number of passes required in previous randomized algorithm for subset selection depend on $l$, and we remove this dependence by MCMC sampling.

\begin{algorithm}[H]
\DontPrintSemicolon
  \KwInput{Data set of $n$ points $\mathcal{X} \subseteq \R^d$; a positive integer $k \leq d$; real-valued parameters $\eps, \alpha > 0$; positive integer parameters $t, l, m$}
  \KwOutput{$S_{0} \cup A_{1} \cup A_{2} \cup \dotsc \cup A_{l}$}
  {
  Sample a subset $S_{0}$ of $k$ point from $\mathcal{X}$ using $\alpha$-approximate volume sampling, i.e., the probability of picking any subset $S$ is
  \[
  \prob{S} \leq \frac{\alpha \cdot \mathrm{vol}(\Delta_{S})^p}{\sum_{T \suchthat \size{T}=k}\mathrm{vol}(\Delta_{T})^{p}},
  \]
  where $\Delta({S})$ denotes the simplex formed by the points in $S$ and the origin.
  }\label{line:volume_sampling_lp} \\
  \For {$i=1, 2, \dotsc, l$} 
  {
  Sample $x \in \mathcal{X}$ with probability $q(x)$, where
$q(x) = \dfrac{1}{2}\dfrac{d(x, \linspan{S_{0}})^{p}}{\err_{p}(\mathcal{X}, S_{0})} + \dfrac{1}{2 \size{\mathcal{X}}}$. \\
  Let $p(x) = \dfrac{d(x, \linspan{S_{0} \cup A_{1} \cup \dotsc \cup A_{i-1}})^{p}}{\err_{p}(\mathcal{X}, S_{0} \cup A_{1} \cup \dotsc \cup A_{i-1})}$ \\
   \While{$|A_i| \leq t$}
   {
    \For {$j=2, 3, \ldots m $}
    {
    Sample $y \in \mathcal{X}$ with probability $q(y)$. \\
    $p(y) \leftarrow \dfrac{d(y, \linspan{S_{0} \cup A_{1} \cup \dotsc \cup A_{i-1}})^{p}}{\err_{p}(\mathcal{X}, S_{0} \cup A_{1} \cup \dotsc \cup A_{i-1})}$ \\
    $\mathrm{Unif}(0, 1)$ be a uniform random number in $(0, 1)$ interval. \\
        \If{$\dfrac{p(y) q(x)}{p(x) q(y)}> \mathrm{Unif}(0, 1)$}
        {
            $x \leftarrow y$\\
            $p(x) \leftarrow p(y)$\\
        }
        
     }
     $A_i \longleftarrow A_i \cup \{x\}$\\
  }
  }\label{alg:line2p}
\caption{Subset selection for $\ell_{p}$ subspace approximation by MCMC sampling}\label{alg:MCMC_sampling_lp}
\end{algorithm}

\newcommand{\Sam}{S}
\newcommand{\mcmc}{\mathcal{A}^{S_0}(\Sam, l)}
\newcommand{\mcmcone}{\mathcal{A}^{S_0}(\Sam\cup\{x\}, l-1)}
\newcommand{\mcmczero}{\mathcal{A}^{S_0}(\Sam, 0)}

\newcommand{\comp}{\mathcal{Q}^{S_0}(\Sam, l)}
\newcommand{\compone}{\mathcal{Q}^{S_0}(\Sam\cup\{x\}, l-1)}
\newcommand{\compzero}{\mathcal{Q}^{S_0}(\Sam, 0)}

\newcommand{\sqlen}{\mathcal{B}(\Sam, l)}
\newcommand{\sqlenone}{\mathcal{B}(\Sam\cup\{x\}, l-1)}
\newcommand{\sqlenzero}{\mathcal{B}(\Sam, 0)}


\newcommand{\mcmcT}{\mathcal{A}^{S_0}(\Sam\cup\{T\}, l-1)}
\newcommand{\compT}{\mathcal{Q}^{S_0}(\Sam\cup\{T\}, l-1)}
\newcommand{\sqlenT}{\mathcal{B}(\Sam\cup\{T\}, l-1)}

\subsection{Expected error over multiple iterations of MCMC sampling}
%
%
%
First, let's set up the notation required to analyze adaptive sampling as well as the MCMC sampling in Algorithm \ref{alg:MCMC_sampling_lp}. For any fixed subset $S \subseteq \mathcal{X}$, we define
\begin{align*}
\err_{p}(\mathcal{X}, S) & = \sum_{x \in \mathcal{X}} d(x, S)^{p}, \\
P^{(1)}(T|S) & = \prod_{x \in T} \frac{d(x, \linspan{S})^{p}}{\err_{p}(\mathcal{X}, S)}, \qquad \text{for any subset $T$ of size $t$}, \\
\expect{T}{\err_{p}(\mathcal{X}, S \cup T)} & = \sum_{T \suchthat \size{T} = t} P^{(1)}(T \given S)~ \err_{p}(\mathcal{X}, S \cup T).
\end{align*}
Given a subset $S \subseteq \mathcal{X}$, $P^{(1)}(T \given S)$ denotes the probability of picking a subset $T \subseteq \mathcal{X}$ of i.i.d. $t$ points by adaptive sampling w.r.t. $S$. We use $P^{(l)}(T_{1:l}|S)$ to denote the probability of picking a subset $T_{1:l} = B_{1} \cup B_{2} \cup \dotsc \cup B_{l} \subseteq \mathcal{X}$ of $tl$ points by $l$ iterative rounds of adaptive sampling, where in the first round we sample a subset $B_{1}$ consisting of i.i.d. $t$ points w.r.t. $S$, in the second round we sample a subset $B_{2}$ consisting of i.i.d. $t$ points w.r.t. $S \cup B_{1}$, and so on to pick $T_{1:l} = B_{1} \cup B_{2} \cup \dotsc \cup B_{l}$ over $l$ iterations. Similarly, in the context of adaptive sampling, we use $T_{2:l}$ to denote $B_{2} \cup \dotsc \cup B_{l}$. We abuse the notation $\expect{T_{1:l} \given S}{\cdot}$ to denote the expectation over $T_{1:l}$ picked in $l$ iterative rounds of adaptive sampling starting from $S$.

Given a \emph{pivot} subset $\tilde{S} \subseteq \mathcal{X}$ and another subset $S \subseteq \mathcal{X}$ such that $\tilde{S} \subseteq S$, consider the following MCMC sampling with parameters $l, t, m$ that picks $l$ subsets $A_{1}, A_{2}, \dotsc, A_{l}$ of $t$ points each. $m$ denotes the number of steps of a random walk used to pick these points. This sampling can be implemented in a single pass over $\mathcal{X}$, for any $l, t, m$ and any given subsets $\tilde{S} \subseteq S$. For $T_{1:l} = A_{1} \cup A_{2} \cup \dotsc \cup A_{l}$ We use $\tilde{P}^{(l)}_{m}(T_{1:l} \given S)$ to denote the probability of picking $T_{1:l}$ as the output of the following sampling procedure. Similarly, in the context of MCMC sampling, we use $T_{2:l}$ to denote $A_{2} \cup \dotsc \cup A_{l}$. We abuse the notation $\expectilde{T_{1:l} \given S}{\cdot}$ to denote the expectation over $T_{1:l}$ picked using the MCMC sampling procedure starting from $S$ with a pivot subset $\tilde{S} \subseteq S$.

\begin{tcolorbox}[width=\textwidth,colback=black!5!white,colframe=black!75!black]
\noindent For $i=1, 2, \dotsc, l$ do:
\begin{enumerate}[(1)]
\item Sample $x \in \mathcal{X}$ with probability $q(x)$, where $q(x) = \dfrac{1}{2}\dfrac{ d(x, \linspan{\tilde{S}})^{p}}{\err_{p}(\mathcal{X}, \tilde{S})} + \dfrac{1}{2 \size{\mathcal{X}}}$.
\item Let $p(x) = \dfrac{d(x, \linspan{S \cup A_{1} \cup \dotsc \cup A_{i-1}})^{p}}{\err_{p}(\mathcal{X}, S \cup A_{1} \cup \dotsc A_{i-1})}$.
\item $A_{i} \leftarrow \emptyset$. While $\size{A_i} \leq t$ do:
\begin{enumerate}[(a)]
\item For $j = 2, 3, \dotsc, m$ do:
\begin{enumerate}[(i)]
\item Sample $y \in \mathcal{X}$ with probability $q(y)$, \\ and let $p(y) = \dfrac{d(y, \linspan{S \cup A_{1} \cup \dotsc \cup A_{i-1}})^{p}}{\err_{p}(\mathcal{X}, S \cup A_{1} \cup \dotsc \cup A_{i-1})}$.
\item If $\dfrac{p(y) q(x)}{p(x) q(y)}> \mathrm{Unif}(0, 1)$ then $x \leftarrow y$ and $p(x) \leftarrow p(y)$
\end{enumerate}
\item $A_i \leftarrow A_i \cup \{x\}$\\
\end{enumerate}
\end{enumerate}
\textbf{Output:} $T_{1:l} = A_{1} \cup A_{2} \cup \dotsc \cup A_{l}$.
\end{tcolorbox}
We require the following additional notation in our analysis of the above MCMC sampling. We use $\tilde{P}^{(1)}_{m}(T \given S)$ to denote the resulting distribution over subsets $T$ of size $t$, when we use the above sampling procedure with $l=1$. We define
\begin{align*}
\ind_{p}(\mathcal{X}, S) & = \mathbbm{1}\left(\err_{p}(\mathcal{X}, S) \leq \epsilon_{1}~ \err_{p}(\mathcal{X}, \tilde{S})\right), \\    
\expectilde{T}{\err_{p}(\mathcal{X}, S \cup T)} & = \sum_{T \suchthat \size{T} = t} \tilde{P}^{(1)}_{m}(T \given S)~ \err_{p}(\mathcal{X}, S \cup T), \\    
\expectilde{T}{\ind_{p}(\mathcal{X}, S \cup T)} & = \sum_{T \suchthat \size{T} = t} \tilde{P}^{(1)}_{m}(T \given S)~ \ind_{p}(\mathcal{X}, S \cup T).    
\end{align*}
Lemma \ref{lem:tv-dist} below shows that for any subset $S \subseteq \mathcal{X}$ that contains the pivot subset $\tilde{S}$ used for MCMC sampling, either $\err_{p}(\mathcal{X}, S)$ is small compared to $\err_{p}(\mathcal{X}, \tilde{S})$ or the adaptive sampling distribution w.r.t. $S$ is closely approximated in total variation distance by the MCMC sampling procedure with the pivot subset $\tilde{S}$.
\begin{lemma} \label{lem:tv-dist}
Let $\eps_1, \eps_2 \in (0, 1)$ and $\tilde{S} \subseteq S \subseteq \mathcal{X}$. Then for $m \geq 1+ \frac{2}{\eps_1}\log \tfrac{1}{\eps_2}$, either $\err_{p}(\mathcal{X}, S) \leq \eps_{1}~ \err_{p}(\mathcal{X}, \tilde{S})$ or $\norm{P^{(1)}(\cdot \given S) - \tilde{P}^{(1)}_{m}(\cdot \given S)}_{TV} \leq \eps_{2}t$.
\end{lemma}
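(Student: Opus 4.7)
The plan is to interpret the inner loop of Algorithm~\ref{alg:MCMC_sampling_lp} as an independence Metropolis--Hastings chain with target $p(x) = d(x,\linspan{S})^p/\err_p(\mathcal{X}, S)$ and proposal $q$ built from the pivot $\tilde{S}$, and to argue that the chain mixes geometrically fast in the only case that matters, namely $\err_p(\mathcal{X}, S) > \eps_1\err_p(\mathcal{X}, \tilde{S})$ (the other case is the first alternative in the lemma). The pivotal inequality is a pointwise lower bound $q(x) \geq (\eps_1/2)\, p(x)$. Since $\tilde{S} \subseteq S$, monotonicity of the distance to a larger span gives $d(x, \linspan{\tilde{S}}) \geq d(x, \linspan{S})$, and therefore
\[
q(x) \;\geq\; \tfrac{1}{2}\cdot\frac{d(x,\linspan{S})^p}{\err_p(\mathcal{X}, \tilde{S})} \;=\; \tfrac{1}{2}\cdot\frac{\err_p(\mathcal{X}, S)}{\err_p(\mathcal{X}, \tilde{S})}\cdot p(x) \;>\; \tfrac{\eps_1}{2}\, p(x);
\]
the additive $1/(2\size{\mathcal{X}})$ term in $q$ only strengthens this while guaranteeing $q$ has full support.

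From this pointwise comparison I would derive a Doeblin-type minorization of the Metropolis--Hastings transition kernel $K$. Writing the acceptance probability as $\alpha(x,y) = \min\{1,\, p(y)q(x)/(p(x)q(y))\}$, one has $\alpha(x,y)q(y) = \min\{q(y),\, p(y)q(x)/p(x)\} \geq (\eps_1/2)\, p(y)$, hence $K(x, A) \geq (\eps_1/2)\, p(A)$ uniformly in $x$ and $A$, i.e., uniform minorization with parameter $\rho = \eps_1/2$. The standard single-step contraction consequence of Doeblin's condition (equivalently the classical Mengersen--Tweedie bound for the independence sampler) gives $\norm{K^n(x_0,\cdot) - p}_{TV} \leq (1-\rho)^n$ from every starting point $x_0$ and hence from every starting distribution. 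The algorithm runs $m-1$ chain steps (the inner loop over $j=2,\dotsc,m$) between successive emitted samples, so each emitted sample has marginal TV distance at most $(1-\eps_1/2)^{m-1} \leq e^{-\eps_1(m-1)/2}$ from $p$. Plugging in $m \geq 1 + (2/\eps_1)\log(1/\eps_2)$ bounds this by $\eps_2$.

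To lift the per-sample bound to the joint distribution of the $t$-tuple $T = A_1$, I would build a sequential coupling of the MCMC output $(X_1, \dotsc, X_t)$ with an i.i.d.\ sequence $(Y_1, \dotsc, Y_t) \sim p$. The key point is that $p$ is exactly stationary for $K$, so for any realized state $x$ the distribution of the chain after another $m-1$ steps is within $\eps_2$ TV of $p$ conditional on $x$. I would couple $(X_1, Y_1)$ maximally so they agree with probability $\geq 1-\eps_2$; then, inductively for $j = 2, \dotsc, t$, conditional on $X_1 = Y_1, \dotsc, X_{j-1} = Y_{j-1}$, couple the next emitted MCMC state $X_j$ with a fresh independent $Y_j\sim p$ so that they agree with probability $\geq 1-\eps_2$. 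Taking the product of conditional success probabilities (equivalently union-bounding the $t$ mismatch events) gives $\prob{(X_1,\dotsc,X_t) = (Y_1,\dotsc,Y_t)} \geq 1 - \eps_2 t$, from which $\norm{P^{(1)}(\cdot\given S) - \tilde{P}^{(1)}_m(\cdot \given S)}_{TV} \leq \eps_2 t$ follows immediately.

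The main obstacle I anticipate is carrying out the coupling so that the error accumulates additively in $t$ rather than compounding. The key subtlety is that the MCMC emissions are \emph{not} mutually independent, but because $K$ preserves $p$, once a chain state has been coupled with a genuine $p$-sample, the subsequent $K$-steps keep the coupled chain at marginal $p$ and let the one-step mixing bound be applied to the next emission conditionally on the history. The remaining ingredients---the Doeblin minorization itself and its geometric-ergodicity consequence---are short standard calculations once the pointwise bound $q \geq (\eps_1/2)\, p$ is in hand.
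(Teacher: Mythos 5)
Your proof is correct and its core is the same as the paper's: you bound $\gamma=\max_{x}p(x)/q(x)\leq 2/\eps_{1}$ using $\tilde{S}\subseteq S$ (so $d(x,\linspan{\tilde{S}})\geq d(x,\linspan{S})$) together with the assumption $\err_{p}(\mathcal{X},S)>\eps_{1}\,\err_{p}(\mathcal{X},\tilde{S})$, and conclude that $m-1$ steps of the independence Metropolis--Hastings walk leave the emitted point within $(1-\eps_{1}/2)^{m-1}\leq e^{-\eps_{1}(m-1)/2}\leq\eps_{2}$ of the adaptive sampling distribution. The differences are in how the two ingredients are justified. For the single-point mixing bound, the paper cites Corollary 1 of \cite{cai} to get $(1-1/\gamma)^{m-1}$ directly, whereas you rederive the same rate from the Doeblin minorization $K(x,\cdot)\geq(\eps_{1}/2)\,p(\cdot)$; this makes the argument self-contained and, usefully, makes explicit that the bound is uniform over the starting state. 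For the lift to $t$ points, the paper simply asserts that both procedures draw i.i.d.\ $t$-samples, so total variation is subadditive over the product; your sequential maximal coupling reaches the same $\eps_{2}t$ bound but is strictly more robust, since it only needs the uniform-in-starting-state mixing bound and therefore remains valid even if the $t$ emissions come from a single continued chain with $m-1$ steps between emissions --- which is what the pseudocode of Algorithm \ref{alg:MCMC_sampling_lp} literally describes, as opposed to the ``$t$ independent random walks'' phrasing in Theorem \ref{thm:alpha-iterations} that the paper's proof implicitly relies on. One small remark: in your inductive coupling step, the relevant fact is not that $p$ is stationary for $K$ but that the Doeblin bound holds from \emph{every} starting state, so that the conditional law of $X_{j}$ given any realization of $X_{j-1}$ (including the conditioning induced by earlier coupling successes) is within $\eps_{2}$ of $p$; you do state the uniform bound, so the argument goes through.
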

\begin{proof}
Consider the $l=1, t=1$ case of the above MCMC sampling procedure. In this case, the procedure outputs only one element of $\mathcal{X}$. This random element is picked by $m$ steps of the following random walk. We first pick $x$ with probability $q(x)$, then we sample a $y$ with probability $q(y)$, we compute $p(x), p(y)$ and sample a number uniformly at random from the interval $(0, 1)$, and if $p(y) q(x)/p(x) q(y) > \mathrm{Unif}(0,1)$, then the random walk moves from $x$ to $y$ and updates $p(x)$ as $p(y)$. Observe that the stationary distribution of the above random walk is the adaptive sampling distribution w.r.t. $S$ given by $p(x) = d(x, \linspan{S})^{p}/\err_{p}(\mathcal{X}, S)$. Using Corollary $1$ of \cite{cai}, the total variation distance after $m$ steps of the random walk is bounded by  
\[
\left(1 - \frac{1}{\gamma}\right)^{m-1} \leq e^{-(m-1)/\gamma} \leq \eps_{2}, \quad \text{where $\gamma = \max_{x \in \mathcal{X}} \frac{p(x)}{q(x)}$}.
\]
The above bound is at most $\eps_{2}$ if we choose to run the random walk for $m \geq 1+\gamma \log \frac{1}{\eps_2}$ steps. Now suppose $\err_{p}(\mathcal{X}, S) > \eps_{1}~ \err_{p}(\mathcal{X}, \tilde{S})$. Then, for any $x \in \mathcal{X}$
\[
\frac{p(x)}{q(x)} = \dfrac{\dfrac{d(x, \linspan{S})^{p}}{\err_{p}(\mathcal{X}, S)}}{\dfrac{1}{2}\dfrac{d(x, \linspan{\tilde{S}})^{p}}{\err_{p}(\mathcal{X}, \tilde{S})} + \dfrac{1}{2 \size{\mathcal{X}}}} \leq \dfrac{2~ d(x, \linspan{S})^{p}~ \err_{p}(\mathcal{X}, \tilde{S})}{d(x, \linspan{\tilde{S}})^{p}~ \err_{p}(\mathcal{X}, S)} \leq \frac{2}{\eps_{1}},
\]
using $\tilde{S} \subseteq S$ and the above assumption $\err_{p}(\mathcal{X}, S) > \eps_{1}~ \err_{p}(\mathcal{X}, \tilde{S})$. Therefore, $m > \frac{2}{\eps_{1}} \log\frac{1}{\eps_{2}}$ ensures that $m$ steps of the random walk gives a distribution within total variation distance $\eps_{2}$ from the adaptive sampling distribution for picking a single point.

Note that for $t > 1$ both the adaptive sampling and the MCMC sampling procedure pick an i.i.d. sample of $t$ points, so the total variation distance is additive in $t$, which means
\[
\norm{P^{(1)}(\cdot \given S) - \tilde{P}^{(1)}_{m}(\cdot \given S)}_{TV} \leq \eps_{2}t,
\]
assuming $\err_{p}(\mathcal{X}, S) > \eps_{1}~ \err_{p}(\mathcal{X}, \tilde{S})$. This completes the proof the lemma.
\end{proof}

%

Now Lemma \ref{lemma:induction} analyzes the effect of starting with an initial subset $S_{0}$ and using the same $S_{0}$ as a pivot subset for doing the MCMC sampling for $l$ subsequent iterations of adaptive sampling, where we pick $t$ i.i.d. points in each iteration using $t$ independent random walks of $m$ steps. Lemma \ref{lemma:induction} shows that the expected error for subspace approximation after doing the $l$ iterations of adaptive sampling is not too far from the expected error for subspace approximation after replacing the $l$ iterations with MCMC sampling.
\begin{lemma} \label{lemma:induction}
For any subset $S_{0} \subseteq \mathcal{X}$, any $\eps_{1}, \eps_{2} \in (0, 1)$ and any positive integers $t, l, m$ with $m \geq 1+\frac{2}{\eps_2}\log \frac{1}{\eps_1}$,
\[
\expectilde{T_{1:l} \given S_{0}}{\err_{p}(\mathcal{X}, S_{0} \cup T_{1:l})} \leq \expect{T_{1:l} \given S_{0}}{\err_{p}(\mathcal{X}, S_{0} \cup T_{1:l})} + \left(\epsilon_{1} + \epsilon_{2} t l\right) \err_{p}(\mathcal{X}, S_{0}).
\]
\end{lemma}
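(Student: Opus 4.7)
My plan is to prove this by induction on $l$, with the inductive hypothesis strengthened so it applies to every set $S \supseteq S_0$ (not just $S = S_0$). The strengthened statement is
\[
\expectilde{T_{1:l} \given S}{\err_p(\mathcal{X}, S \cup T_{1:l})} \leq \expect{T_{1:l} \given S}{\err_p(\mathcal{X}, S \cup T_{1:l})} + (\epsilon_1 + \epsilon_2 t l)~ \err_p(\mathcal{X}, S_0),
\]
where the MCMC expectation always uses $S_0$ as the pivot. The strengthening is needed because after conditioning on the first MCMC round's output $A_1$, the remaining $l-1$ rounds start from $S_0 \cup A_1 \neq S_0$. The base case $l=0$ is trivial since both sides collapse to $\err_p(\mathcal{X}, S)$.

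For the inductive step, I would split into two cases based on how $\err_p(\mathcal{X}, S)$ compares to $\err_p(\mathcal{X}, S_0)$. In the \emph{small-error case} $\err_p(\mathcal{X}, S) \leq \epsilon_1~ \err_p(\mathcal{X}, S_0)$, monotonicity of $\err_p(\mathcal{X}, \cdot)$ under adding points gives $\err_p(\mathcal{X}, S \cup T_{1:l}) \leq \err_p(\mathcal{X}, S) \leq \epsilon_1 \err_p(\mathcal{X}, S_0)$ deterministically, so the MCMC expectation is itself bounded by $\epsilon_1 \err_p(\mathcal{X}, S_0)$ and the inequality holds with the $\epsilon_2 t l$ slack unused. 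In the \emph{large-error case} $\err_p(\mathcal{X}, S) > \epsilon_1 \err_p(\mathcal{X}, S_0)$, Lemma \ref{lem:tv-dist} gives $\|P^{(1)}(\cdot \given S) - \tilde P^{(1)}_m(\cdot \given S)\|_{TV} \leq \epsilon_2 t$.

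The main calculation then proceeds by conditioning on $A_1$ and recursing. Let $F(A_1) := \expectilde{T_{2:l} \given S \cup A_1}{\err_p(\mathcal{X}, S \cup A_1 \cup T_{2:l})}$. The standard $L^1$ / TV duality for expectations of nonnegative functions bounded by $M$ gives
\[
\expectilde{A_1 \given S}{F(A_1)} \leq \expect{A_1 \given S}{F(A_1)} + \epsilon_2 t \cdot M,
\]
where by monotonicity we may take $M = \err_p(\mathcal{X}, S) \leq \err_p(\mathcal{X}, S_0)$. Applying the inductive hypothesis to $S \cup A_1 \supseteq S_0$ yields $F(A_1) \leq \expect{T_{2:l} \given S \cup A_1}{\err_p(\mathcal{X}, S \cup A_1 \cup T_{2:l})} + (\epsilon_1 + \epsilon_2 t (l-1)) \err_p(\mathcal{X}, S_0)$ pointwise in $A_1$. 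Taking the outer adaptive expectation over $A_1 \sim P^{(1)}(\cdot \given S)$ collapses the double expectation to $\expect{T_{1:l} \given S}{\err_p(\mathcal{X}, S \cup T_{1:l})}$, and adding the $\epsilon_2 t \err_p(\mathcal{X}, S_0)$ slack from the TV swap gives the desired $(\epsilon_1 + \epsilon_2 t l)\err_p(\mathcal{X}, S_0)$ error term.

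The main obstacle I anticipate is the careful bookkeeping around the two sources of slack and ensuring the uniform bound $M$ used in the TV-to-expectation conversion is valid for every $A_1$, which is exactly where the monotonicity of $\err_p(\mathcal{X}, \cdot)$ with respect to set inclusion and the pivot property $S_0 \subseteq S \cup A_1$ are essential. A secondary subtlety is that Lemma \ref{lem:tv-dist} is stated for a single round of $t$ i.i.d. samples, so the MCMC distribution on $A_1$ already incorporates the i.i.d. $t$-fold product and its additive TV bound; I would apply it once per inductive step rather than attempting to bound TV distances across multiple rounds at once.
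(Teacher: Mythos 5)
Your proof is correct and follows essentially the same route as the paper's: induction on $l$ with the hypothesis strengthened to all supersets of the pivot, conditioning on the first round, a small-/large-error dichotomy resolved by monotonicity and Lemma~\ref{lem:tv-dist} respectively, and a total-variation swap costing $\eps_2 t\,\err_p(\mathcal{X},S_0)$ per round. The only cosmetic differences are that the paper performs the dichotomy on the post-round set $S_1 = S_0 \cup A_1$ via the indicator $\ind_p$ and carries an indicator-weighted $\eps_1$ term through the induction, while you split on the current set before sampling; both give the stated bound.
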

\begin{proof}
We show a slightly stronger inequality than the one given above, i.e., for any $S_{0}$ such that $\tilde{S} \subseteq S_{0}$,
\begin{align*}
& \expectilde{T_{1:l} \given S_{0}}{\err_{p}(\mathcal{X}, S_{0} \cup T_{1:l})} \\
& \leq \expect{T_{1:l} \given S_{0}}{\err_{p}(\mathcal{X}, S_{0} \cup T_{1:l})} +  \left(\epsilon_{1} \expectilde{T_{1:l} \given S_{0}}{\ind_{p}(\mathcal{X}, S_{0} \cup T_{1:l})} + \epsilon_{2} t l\right) \err_{p}(\mathcal{X}, \tilde{S}).
\end{align*}
The special case $S_{0} = \tilde{S}$ gives the lemma. We prove the above-mentioned stronger statement by induction on $l$. For $l=0$, the above inequality holds trivially. Now assuming induction hypothesis, the above holds true for $l-1$ iterations (instead of $l$) starting with any subset $S_{1} = S_{0} \cup A \subseteq \mathcal{X}$ because $\tilde{S} \subseteq S_{0} \subseteq S_{1}$. 
\begin{align}
& \expectilde{T_{1:l} \given S_{0}}{\err_{p}(\mathcal{X}, S_{0} \cup T_{1:l})} \nonumber \\
& = \expectilde{S_{1} \given S_{0}}{\expectilde{T_{2:l} \given S_{1}}{\err_{p}(\mathcal{X}, S_{1} \cup T_{2:l})}} \nonumber \\
& = \sum_{S_{1} \suchthat \ind_{p}(\mathcal{X}, S_{1}) = 1} \tilde{P}^{(1)}_{m}(S_{1} \given S_{0})~ \expectilde{T_{2:l} \given S_{1}}{\err_{p}(\mathcal{X}, S_{1} \cup T_{2:l})} \nonumber \\
& \qquad \qquad + \sum_{S_{1} \suchthat \ind_{p}(\mathcal{X}, S_{1}) = 0} \tilde{P}^{(1)}_{m}(S_{1} \given S_{0})~ \expectilde{T_{2:l} \given S_{1}}{\err_{p}(\mathcal{X}, S_{1} \cup T_{2:l})}. \label{eq:parts-by-ind}
\end{align}
If $\ind_{p}(\mathcal{X}, S_{1}) = 1$ then $\err_{p}(\mathcal{X}, S_{1} \cup T_{2:l}) \leq \err_{p}(\mathcal{X}, S_{1}) \leq \eps_{1}~ \err_{p}(\mathcal{X}, S_{0})$, so the first part of the above sum can be bounded as follows.
\begin{align}
& \sum_{S_{1} \suchthat \ind_{p}(\mathcal{X}, S_{1}) = 1} \tilde{P}^{(1)}_{m}(S_{1} \given S_{0})~ \expectilde{T_{2:l} \given S_{1}}{\err_{p}(\mathcal{X}, S_{1} \cup T_{2:l})} \nonumber \\
& \leq \eps_{1}~ \err_{p}(\mathcal{X}, S_{0}) \sum_{S_{1} \suchthat \ind_{p}(\mathcal{X}, S_{1}) = 1} \tilde{P}^{(1)}_{m}(S_{1} \given S_{0})~ \expectilde{T_{2:l} \given S_{1}}{\ind_{p}(\mathcal{X}, S_{1} \cup T_{2:l})}. \label{eq:part-1-ind}
\end{align}
Now, the second part can be bounded as follows.
\begin{align}
& \sum_{S_{1} \suchthat \ind_{p}(\mathcal{X}, S_{1}) = 0} \tilde{P}^{(1)}_{m}(S_{1} \given S_{0})~ \expectilde{T_{2:l} \given S_{1}}{\err_{p}(\mathcal{X}, S_{1} \cup T_{2:l})} \nonumber \\
& = \sum_{S_{1} \suchthat \ind_{p}(\mathcal{X}, S_{1}) = 0} \tilde{P}^{(1)}_{m}(S_{1} \given S_{0})~ \expectilde{T_{2:l} \given S_{1}}{\err_{p}(\mathcal{X}, S_{1} \cup T_{2:l})} \nonumber \\
& \leq \sum_{S_{1} \suchthat \ind_{p}(\mathcal{X}, S_{1}) = 0} \tilde{P}^{(1)}_{m}(S_{1} \given S_{0}) \cdot \nonumber \\
& \qquad \left(\expect{T_{2:l} \given S_{1}}{\err_{p}(\mathcal{X}, S_{1} \cup T_{2:l})} +  \left(\eps_{1}~ \expectilde{T_{2:l} \given S_{1}}{\ind_{p}(\mathcal{X}, S_{1} \cup T_{2:l})} + \eps_{2} t (l-1)\right) \err_{p}(\mathcal{X}, \tilde{S})\right) \nonumber \\
& \qquad \qquad \text{by applying the induction hypothesis to $(l-1)$ iterations starting from $S_{1}$} \nonumber \\
& \leq \sum_{S_{1} \suchthat \ind_{p}(\mathcal{X}, S_{1}) = 0} P^{(1)}(S_{1} \given S_{0})~ \expect{T_{2:l} \given S_{1}}{\err_{p}(\mathcal{X}, S_{1} \cup T_{2:l})} \nonumber \\
& \qquad + \eps_{1}~ \err_{p}(\mathcal{X}, \tilde{S})~ \sum_{S_{1} \suchthat \ind_{p}(\mathcal{X}, S_{1}) = 0} \tilde{P}^{(1)}_{m}(S_{1} \given S_{0})~ \expectilde{T_{2:l} \given S_{1}}{\ind_{p}(\mathcal{X}, S_{1} \cup T_{2:l})} \nonumber \\
& \qquad + \eps_{2} t (l-1)~ \err_{p}(\mathcal{X}, \tilde{S})~ \sum_{S_{1} \suchthat \ind_{p}(\mathcal{X}, S_{1}) = 0} \tilde{P}^{(1)}_{m}(S_{1} \given S_{0}) \nonumber \\
& \qquad + \sum_{S_{1} \suchthat \ind_{p}(\mathcal{X}, S_{1}) = 0} \abs{\tilde{P}^{(1)}_{m}(S_{1} \given S_{0}) - P^{(1)}(S_{1} \given S_{0})}~ \expect{T_{2:l} \given S_{1}}{\err_{p}(\mathcal{X}, S_{1} \cup T_{2:l})} \nonumber \\
& \leq \sum_{S_{1}} P^{(1)}(S_{1} \given S_{0})~ \expect{T_{2:l} \given S_{1}}{\err_{p}(\mathcal{X}, S_{1} \cup T_{2:l})} \nonumber \\
& \qquad + \eps_{1}~ \err_{p}(\mathcal{X}, \tilde{S})~ \sum_{S_{1} \suchthat \ind_{p}(\mathcal{X}, S_{1}) = 0} \tilde{P}^{(1)}_{m}(S_{1} \given S_{0})~ \expectilde{T_{2:l} \given S_{1}}{\ind_{p}(\mathcal{X}, S_{1} \cup T_{2:l})} \nonumber \\
& \qquad + \eps_{2} t (l-1)~ \err_{p}(\mathcal{X}, \tilde{S}) + \sum_{S_{1} \suchthat \ind_{p}(\mathcal{X}, S_{1}) = 0} \abs{\tilde{P}^{(1)}_{m}(S_{1} \given S_{0}) - P^{(1)}(S_{1} \given S_{0})}~ \err_{p}(\mathcal{X}, \tilde{S}) \nonumber \\
& \leq \expect{T_{1:l} \given S_{0}}{\err_{p}(\mathcal{X}, S_{0} \cup T_{1:l})} \nonumber \\
& \qquad + \eps_{1}~ \err_{p}(\mathcal{X}, \tilde{S})~ \sum_{S_{1} \suchthat \ind_{p}(\mathcal{X}, S_{1}) = 0} \tilde{P}^{(1)}_{m}(S_{1} \given S_{0})~ \expectilde{T_{2:l} \given S_{1}}{\ind_{p}(\mathcal{X}, S_{1} \cup T_{2:l})} \nonumber \\
& \qquad + \eps_{2} t (l-1)~ \err_{p}(\mathcal{X}, \tilde{S})  + \norm{\tilde{P}^{(1)}(\cdot \given S_{0}) - P^{(1)}(\cdot \given S_{0})}_{TV}~ \err_{p}(\mathcal{X}, \tilde{S}) \\
& \leq \expect{T_{1:l} \given S_{0}}{\err_{p}(\mathcal{X}, S_{0} \cup T_{1:l})} \nonumber \\
& \qquad + \eps_{1}~ \err_{p}(\mathcal{X}, \tilde{S})~ \sum_{S_{1} \suchthat \ind_{p}(\mathcal{X}, S_{1}) = 0} \tilde{P}^{(1)}_{m}(S_{1} \given S_{0})~ \expectilde{T_{2:l} \given S_{1}}{\ind_{p}(\mathcal{X}, S_{1} \cup T_{2:l})} \nonumber \\
& \qquad + \eps_{2} t (l-1)~ \err_{p}(\mathcal{X}, \tilde{S})  + \eps_{2} t~ \err_{p}(\mathcal{X}, \tilde{S}), \label{eq:part-2-ind} 
\end{align}
using Lemma \ref{lem:tv-dist} about the total variation distance between $P^{(1)}$ and $\tilde{P}^{(1)}$ distributions. Plugging the bounds $\eqref{eq:part-1-ind}$ and $\eqref{eq:part-2-ind}$ into $\eqref{eq:parts-by-ind}$, we get
\begin{align*}
& \expectilde{T_{1:l} \given S_{0}}{\err_{p}(\mathcal{X}, S_{0} \cup T_{1:l})} \\
& \leq \expect{T_{1:l} \given S_{0}}{\err_{p}(\mathcal{X}, S_{0} \cup T_{1:l})} + \eps_{1}~ \err_{p}(\mathcal{X}, \tilde{S})~ \sum_{S_{1}} \tilde{P}^{(1)}_{m}(S_{1} \given S_{0})~ \expectilde{T_{2:l} \given S_{1}}{\ind_{p}(\mathcal{X}, S_{1} \cup T_{2:l})} \nonumber \\
& \qquad + \eps_{2} t (l-1)~ \err_{p}(\mathcal{X}, \tilde{S})  + \eps_{2} t~ \err_{p}(\mathcal{X}, \tilde{S}) \\
& = \expect{T_{1:l} \given S_{0}}{\err_{p}(\mathcal{X}, S_{0} \cup T_{1:l})} +  \left(\epsilon_{1} \expectilde{T_{1:l} \given S_{0}}{\ind_{p}(\mathcal{X}, S_{0} \cup T_{1:l})} + \epsilon_{2} tl\right)~ \err_{p}(\mathcal{X}, \tilde{S}) \\
& \leq \expect{T_{1:l} \given S_{0}}{\err_{p}(\mathcal{X}, S_{0} \cup T_{1:l})} +  \left(\epsilon_{1} + \epsilon_{2} t l\right)~ \err_{p}(\mathcal{X}, \tilde{S}),
\end{align*}
which completes the proof of Lemma \ref{lemma:induction}.
\end{proof}

\subsection{Near-optimal subset selection for $\ell_{2}$ subspace approximation in $2$ passes by MCMC sampling}
\begin{proposition} \label{prop:adaptive-iterations}
Let $k$ be any positive integer and let $\eps \in (0, 1)$. Let $S_{0}$ be any subset $S_{0} \subseteq \mathcal{X}$. If $S_{l} = S_{0} \cup T_{1:l}$ be obtained by starting from $S_{0}$ and doing  adaptive sampling in $l$ iterations, where in each iteration we add $t$ points from $\mathcal{X}$, then we have $\size{S_{l}} = \size{S_{0}} + tl$ and
\[
\expect{T_{1:l} \given S_{0}}{\err_{2}(\mathcal{X}, S_{0} \cup T_{1:l})} \leq \left(1 + \frac{2k}{t}\right)~ \err_{2}(\mathcal{X}, V^{*}) + \left(\frac{k}{t}\right)^{l}~ \err_{2}(\mathcal{X}, S_{0}),
\]
where $V^{*}$ minimizes $\err_{2}(\mathcal{X}, V)$ over all linear subspaces $V$ of dimension $k$.
\end{proposition}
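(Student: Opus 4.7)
The plan is to prove the error bound by a short induction on $l$, using the classical single-iteration adaptive sampling lemma (from \cite{DBLP:conf/approx/DeshpandeV06}) as the key primitive. The size claim $|S_{l}| = |S_{0}| + tl$ is immediate from the construction of the sampling procedure, so only the expectation inequality is substantive.

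The single-round primitive I would invoke is the following: for any subset $S \subseteq \mathcal{X}$ and any $k$-dimensional subspace $V^{*}$, if $T$ consists of $t$ i.i.d. points drawn from $\mathcal{X}$ according to $P^{(1)}(\cdot \given S)$, then
\[
\expect{T \given S}{\err_{2}(\mathcal{X}, S \cup T)} \;\leq\; \err_{2}(\mathcal{X}, V^{*}) + \frac{k}{t}\, \err_{2}(\mathcal{X}, S).
\]
This is the standard one-round adaptive sampling bound; its proof expresses $\err_{2}(\mathcal{X}, V^{*})$ in terms of the top-$k$ singular directions of the residual matrix obtained after projecting $\mathcal{X}$ orthogonally to $\linspan{S}$, and shows that each adaptive sample captures in expectation a $1/t$ fraction of the squared mass along any fixed unit direction, hence a $k/t$ fraction along the top-$k$ subspace.

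Granting this primitive, let $a_{j} := \expect{T_{1:j} \given S_{0}}{\err_{2}(\mathcal{X}, S_{0} \cup T_{1:j})}$. Applying the single-round bound conditionally on $T_{1:j-1}$ (with $S = S_{0} \cup T_{1:j-1}$) and then taking the outer expectation via the tower property yields the linear recurrence
\[
a_{j} \;\leq\; \err_{2}(\mathcal{X}, V^{*}) + \frac{k}{t}\, a_{j-1}, \qquad a_{0} = \err_{2}(\mathcal{X}, S_{0}).
\]
Unrolling the recurrence gives
\[
a_{l} \;\leq\; \Bigl(\tfrac{k}{t}\Bigr)^{l} \err_{2}(\mathcal{X}, S_{0}) + \err_{2}(\mathcal{X}, V^{*}) \sum_{j=0}^{l-1} \Bigl(\tfrac{k}{t}\Bigr)^{j} \;\leq\; \Bigl(\tfrac{k}{t}\Bigr)^{l} \err_{2}(\mathcal{X}, S_{0}) + \frac{\err_{2}(\mathcal{X}, V^{*})}{1 - k/t}.
\]
In the intended regime $t \geq 2k$, the elementary inequality $\tfrac{1}{1-k/t} \leq 1 + \tfrac{2k}{t}$ finishes the proof.

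The only real obstacle in this program is the single-round Deshpande--Vempala bound itself, which is non-trivial on its own (involving a variance/SVD calculation on the residual matrix) but here is pulled in as a black box. Once that primitive is in hand, the induction is a one-line telescoping argument and the conversion $\tfrac{1}{1-k/t} \leq 1 + \tfrac{2k}{t}$ is routine.
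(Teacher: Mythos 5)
Your proof is correct and follows the same route as the paper, which simply cites Corollary 1 of Deshpande--Vempala \cite{DBLP:conf/approx/DeshpandeV06}: your recurrence $a_{j} \le \err_{2}(\mathcal{X},V^{*}) + (k/t)\,a_{j-1}$, obtained from the one-round lemma via the tower property and then unrolled with the geometric series, is exactly how that corollary is derived. The only caveat is that your final step $\tfrac{1}{1-k/t} \le 1+\tfrac{2k}{t}$ (and indeed the convergence of the geometric bound) requires $t \ge 2k$, a hypothesis not stated in the proposition but satisfied wherever it is invoked in the paper (e.g.\ $t = 8k/\eps$ with $\eps < 1$).
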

\begin{proof}
Follows directly from Corollary 1 in~\cite{DBLP:conf/approx/DeshpandeV06}. 
\end{proof}

Now we state a slight modification of Theorem 1.3 from \cite{DeshpandeRVW06}. The original theorem gives $(k+1)$ approximation guarantee for exact volume sampling, in expectation. Proposition \ref{prop:l2-vol} below modifies it to say that $\alpha$-approximate volume sampling gives $\alpha (k+1)$ approximation guarantee, in expectation.
\begin{proposition}{(Immediate from Theorem 1.3, \cite{DeshpandeRVW06})} \label{prop:l2-vol}
If $S_{0} \subseteq \mathcal{X}$ is a random subset of size $k$ picked according to $\alpha$-approximate volume sampling probability, i.e.,
\[
\prob{S} \leq \frac{\alpha~ \operatorname{vol}(\Delta_{S})^{2}}{\sum_{T \suchthat \size{T}=k} \operatorname{vol}(\Delta_{T})^{2}},
\]
then $\err_{2}(\mathcal{X}, S_{0})$ is at most $\alpha (k+1)~ \err_{p}(\mathcal{X}, V^{*})$, in expectation, where $V^{*}$ minimizes $\err_{2}(\mathcal{X}, V)$ over all linear subspaces $V$ of dimension $k$..
\end{proposition}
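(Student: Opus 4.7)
The plan is to treat Theorem 1.3 of \cite{DeshpandeRVW06} as a black box and propagate the $\alpha$-factor through the expectation by a simple pointwise domination argument. Let $P^{*}(S) = \operatorname{vol}(\Delta_{S})^{2} / Z$ denote the exact volume sampling distribution with normalizer $Z = \sum_{T \suchthat \size{T}=k}\operatorname{vol}(\Delta_{T})^{2}$, and let $P$ denote the $\alpha$-approximate distribution used in Proposition~\ref{prop:l2-vol}, which by hypothesis satisfies $P(S) \leq \alpha\, P^{*}(S)$ for every $k$-subset $S$ of $\mathcal{X}$. The cited theorem then states
\[
\sum_{S \suchthat \size{S}=k} P^{*}(S)\, \err_{2}(\mathcal{X}, S) \leq (k+1)\, \err_{2}(\mathcal{X}, V^{*}).
\]

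The key (and essentially only) step is that, because $\err_{2}(\mathcal{X}, S) \geq 0$ for every $S$, the pointwise inequality $P(S) \leq \alpha\, P^{*}(S)$ can be integrated termwise against $\err_{2}(\mathcal{X}, S)$ without any sign cancellation. This gives
\[
\expect{S \sim P}{\err_{2}(\mathcal{X}, S)} = \sum_{S} P(S)\, \err_{2}(\mathcal{X}, S) \leq \alpha \sum_{S} P^{*}(S)\, \err_{2}(\mathcal{X}, S) \leq \alpha(k+1)\, \err_{2}(\mathcal{X}, V^{*}),
\]
which is the desired bound.

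There is no genuine obstacle here; the proposition is labeled ``immediate'' precisely because it reduces to the one-line monotonicity argument above. The only thing to keep in mind is that the $\alpha$-approximation hypothesis is stated as a pointwise upper bound on $P(S)$ (rather than a two-sided ratio bound or a total-variation guarantee), which is exactly the form one needs for the domination step; no lower bound on $P(S)$ is invoked, and non-negativity of the error does the rest.
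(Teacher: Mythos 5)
Your argument is correct and is exactly the ``immediate'' reduction the paper intends: the paper gives no explicit proof beyond citing Theorem~1.3 of \cite{DeshpandeRVW06}, and the pointwise domination $P(S) \leq \alpha P^{*}(S)$ combined with non-negativity of $\err_{2}(\mathcal{X}, S)$ is precisely the one-line step being elided. Nothing further is needed.
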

Theorem \ref{thm:alpha-iterations} and Theorem \ref{thm:l2-algo} essentially show that the MCMC sampling in Algorithm \ref{alg:MCMC_sampling_lp} requires only $2$ passes to approximately implement the multiple passes of adaptive sampling. Moreover, using the guarantee in Lemma \ref{lemma:induction}, we get a nearly-optimal subset selection for the $p=2$ case of subspace approximation in only $2$ passes.
\begin{theorem} \label{thm:alpha-iterations}
Let $k$ be any positive integer and let $\eps \in (0, 1)$. Let $S_{0}$ be a subset of $k$ points sampled from $\mathcal{X}$ using $\alpha$-approximate volume sampling for $p=2$. Let $S_{l} = S_{0} \cup T_{1:l}$ be obtained by starting from $S_{0}$ and doing $l$ iterations of MCMC sampling, where in each iteration we add $t$ points from $\mathcal{X}$ by running $t$ independent random walks for $m$ steps each. Then for $t = 8k/\eps$, $l = \log(2\alpha(k+1)/\eps)/\log(8/\eps)$ and $m \geq 1 + 128k \alpha \log^{2}(2\alpha/\eps)/\eps^{2} \log(8/\eps)$, we have
\[
\expectilde{T_{1:l} \given S_{0}}{\err_{2}(\mathcal{X}, S_{0} \cup T_{1:l})} \leq (1+\eps)~ \err_{2}(\mathcal{X}, V^{*}), \]
where $V^{*}$ minimizes $\err_{2}(\mathcal{X}, V)$ over all linear subspaces $V$ of dimension $k$.
\end{theorem}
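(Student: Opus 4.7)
The plan is to chain the three ingredients already in place. Conditional on any fixed realization of the initial subset $S_0$, Lemma~\ref{lemma:induction} replaces the MCMC expectation by the adaptive-sampling expectation at the cost of an additive error $(\eps_1 + \eps_2 t l)\, \err_2(\mathcal{X}, S_0)$, provided $m$ meets the stated mixing condition. Proposition~\ref{prop:adaptive-iterations} then upper bounds the adaptive-sampling expectation by $(1 + 2k/t)\, \err_2(\mathcal{X}, V^*) + (k/t)^l\, \err_2(\mathcal{X}, S_0)$. Taking a final expectation over $S_0$ (drawn by $\alpha$-approximate volume sampling) and applying Proposition~\ref{prop:l2-vol} to convert $\E[\err_2(\mathcal{X}, S_0)]$ into $\alpha(k+1)\, \err_2(\mathcal{X}, V^*)$ yields
\[
\E\bigl[\err_2(\mathcal{X}, S_0 \cup T_{1:l})\bigr] \leq \Bigl(1 + \tfrac{2k}{t}\Bigr)\, \err_2(\mathcal{X}, V^*) + \Bigl[(k/t)^l + \eps_1 + \eps_2 t l\Bigr]\, \alpha(k+1)\, \err_2(\mathcal{X}, V^*).
\]

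The remaining work is a calibration of $t, l, \eps_1, \eps_2$ that forces the right-hand side to be at most $(1+\eps)\, \err_2(\mathcal{X}, V^*)$. Taking $t = 8k/\eps$ pins down $2k/t = \eps/4$ and reduces $k/t$ to $\eps/8$. The choice $l = \log(2\alpha(k+1)/\eps)/\log(8/\eps)$ is engineered so that $(k/t)^l \cdot \alpha(k+1) = \eps/2$, absorbing the dominant portion of the remaining error budget. The residual $\eps/4$ is then split between $\eps_1 \alpha(k+1)$ and $\eps_2\, t l\, \alpha(k+1)$ by taking each at most $\eps/8$; substituting the resulting $\eps_1, \eps_2$ into the mixing requirement $m \geq 1 + (2/\eps_2)\log(1/\eps_1)$ from Lemma~\ref{lemma:induction}, and simplifying with $t = 8k/\eps$ and the formula for $l$, recovers the claimed lower bound on $m$ up to absolute constants and routine logarithmic simplifications.

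The genuinely delicate step has already been discharged inside Lemma~\ref{lemma:induction}, whose induction propagates the indicator $\ind_p$ so that the MCMC slack stays additive in $\eps_1 + \eps_2 t l$ rather than growing geometrically in $l$. The remaining obstacle is therefore purely one of bookkeeping: one must be careful that both the geometric residual $(k/t)^l$ from Proposition~\ref{prop:adaptive-iterations} and the MCMC slack from Lemma~\ref{lemma:induction} scale against $\err_2(\mathcal{X}, S_0)$ rather than $\err_2(\mathcal{X}, V^*)$, so that a single application of Proposition~\ref{prop:l2-vol} can convert them simultaneously into multiples of the optimum, and that the calibrated $\eps_1, \eps_2$ actually fit within the $m$ budget stated in the theorem.
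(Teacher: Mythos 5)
Your proposal is correct and follows essentially the same route as the paper's own proof: condition on $S_0$, apply Lemma~\ref{lemma:induction} to pass from MCMC to adaptive sampling, invoke Proposition~\ref{prop:adaptive-iterations}, then take expectation over the $\alpha$-approximate volume sample via Proposition~\ref{prop:l2-vol}, with the identical parameter calibration $2k/t=\eps/4$, $(k/t)^l\alpha(k+1)=\eps/2$, and $\eps_1\alpha(k+1)=\eps_2 tl\,\alpha(k+1)=\eps/8$. Your closing observation that all residual terms must scale against $\err_2(\mathcal{X},S_0)$ so that one application of Proposition~\ref{prop:l2-vol} converts them simultaneously is exactly the bookkeeping the paper performs.
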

\begin{proof}
We use the following setting of parameters $t, l, \eps_{1}, \eps_{2}, m$ as functions of $k, \eps, \alpha$.
\begin{align*}
t & = \frac{8k}{\eps}, \qquad l = \frac{\log(2\alpha(k+1)/\eps)}{\log(8/\eps)}, \qquad
\eps_{1} = \frac{\eps}{8\alpha (k+1)}, \qquad \text{and} \\
\eps_{2} & = \frac{\eps}{8tl\alpha(k+1)} = \frac{\eps^{2} \log (8/\eps)}{64k(k+1) \alpha \log(2\alpha/\eps)}, \\
m & = 1 + \frac{2}{\eps_{2}} \log\frac{1}{\eps_{1}} = 1 + \frac{128k(k+1) \alpha \log^{2}(2\alpha/\eps)}{\eps^{2} \log(8/\eps)}
\end{align*}
We have
\begin{align*}
& \expectilde{T_{1:l} \given S_{0}}{\err_{p}(\mathcal{X}, S_{0} \cup T_{1:l})} \\
& \leq \expect{T_{1:l} \given S_{0}}{\err_{p}(\mathcal{X}, S_{0} \cup T_{1:l})} + \left(\epsilon_{1} + \epsilon_{2} t l\right)~ \err_{p}(\mathcal{X}, S_{0}) \\
& \hspace{6cm} \text{by Lemma \ref{lemma:induction} and $m \geq 1+\frac{2}{\eps_{2}}\log \frac{1}{\eps_{1}}$} \\
& \leq \left(1 + \frac{2k}{t}\right)~ \err_{2}(\mathcal{X}, V^{*}) + \left(\frac{k}{t}\right)^{l}~ \err_{2}(\mathcal{X}, S_{0}) +  \left(\epsilon_{1} + \epsilon_{2} t l\right)~ \err_{2}(\mathcal{X}, S_{0}) \\
& \hspace{6cm} \text{by Proposition \ref{prop:adaptive-iterations}} \\
& \leq \left(1 + \frac{\eps}{4}\right)~ \err_{2}(\mathcal{X}, V^{*}) + \frac{\eps}{2\alpha(k+1)} \alpha(k+1)~ \err_{2}(\mathcal{X}, V^{*}) \\
& \qquad \qquad + \left(\frac{\eps}{8\alpha(k+1)} + \frac{\eps}{8\alpha(k+1)}\right) \alpha (k+1)~ \err_{2}(\mathcal{X}, V^{*}) \\
& \hspace{6cm} \text{by Proposition \ref{prop:l2-vol} and plugging in $t, l, \eps_{1}, \eps_{2}$} \\
& \leq \left(1 + \frac{\eps}{4}\right)~ \err_{2}(\mathcal{X}, V^{*}) + \frac{\eps}{2}~ \err_{2}(\mathcal{X}, V^{*}) + \frac{\eps}{4}~ \err_{2}(\mathcal{X}, V^{*}) \\ 
& \leq (1+\eps)~ \err_{2}(\mathcal{X}, V^{*}).
\end{align*}
\end{proof}
It is known (Proposition 1 in \cite{DBLP:conf/approx/DeshpandeV06}) that picking a subset of size $k$ from $\mathcal{X}$ by adaptive sampling in $k$ iterations, with $1$ point per iteration, gives $\alpha$-approximate volume sampling with $\alpha = k! = 2^{O(k \log k)}$. Theorem \ref{thm:alpha-iterations} can be used to reduce the number of passes required in \cite{DBLP:conf/approx/DeshpandeV06} for a $(1+\eps)$-approximation from $O(k \log k)$ to $k+1$. Now we show that if we use a single-pass MCMC algorithm for $\alpha$-approximate volume sampling for a suitable $\alpha$, then the entire Algorithm \ref{alg:MCMC_sampling_lp} can be implemented in only two passes.  

\begin{theorem} \label{thm:l2-algo}
Let $k$ be any positive integer and let $\eps \in (0, 1)$. Algorithm \ref{alg:MCMC_sampling_lp} can be implemented in $2$ passes over $\mathcal{X}$, with $\alpha$-approximate volume sampling to pick a subset $S_{0}$ of size $k$ taking one pass over $\mathcal{X}$ and the remaining MCMC procedure with $S_{0}$ as its pivot subset taking another pass. Setting $t, l, m$ as $t = 8k/\eps$, $l = \log(2/\eps)/\log(8/\eps)$ and $m \geq 1 + 128k \log^{2}(2(k+2)/\eps)/\eps^{2} \log(8/\eps)$, the algorithm picks a subset of $tl = O(k \log k/\eps)$ points given by $T_{1:l} = A_{1} \cup A_{2} \cup \dotsc \cup A_{l}$ such that
\[
\expectilde{T_{1:l} \given S_{0}}{\err_{2}(\mathcal{X}, S_{0} \cup T_{1:l})} \leq (1+\eps)~ \err_{2}(\mathcal{X}, V^{*}),
\]
where $V^{*}$ minimizes $\err_{2}(\mathcal{X}, V)$ over all linear subspaces $V$ of dimension $k$. This gives a near-optimal $O(k \log k/\eps)$-sized subset selection with $(1+\eps)$ approximation guarantee in only $2$ passes over $\mathcal{X}$. The running time of MCMC procedure is $d\cdot \poly(k/\eps).$
\end{theorem}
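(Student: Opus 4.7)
The plan is to realize Algorithm \ref{alg:MCMC_sampling_lp} as exactly two sequential passes over $\mathcal{X}$ and then invoke Theorem \ref{thm:alpha-iterations} as a black box for the error guarantee.

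\textbf{Pass 1: single-pass $\alpha$-approximate volume sampling.} First I would draw the initial subset $S_{0}$ of size $k$ in one streaming pass using a Metropolis--Hastings chain whose proposal distribution is squared-length and whose target is proportional to $\mathrm{vol}(\Delta_{S})^{2}$, in the spirit of the single-pass MCMC $D^{2}$-sampler of Bachem et al. Because the squared-length proposal dominates the target ratio by a bounded factor, a random walk of length $\poly(k/\eps)$ suffices to bring the total variation distance to the exact volume-sampling distribution small enough, yielding $\alpha$-approximate volume sampling with $\alpha$ a modest factor (constant, or at worst $O(k)$). Proposition \ref{prop:l2-vol} then gives $\operatorname{E}[\err_{2}(\mathcal{X}, S_{0})] \leq \alpha(k+1)\,\err_{2}(\mathcal{X}, V^{*})$.

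\textbf{Pass 2: the MCMC sampling loop.} Because the proposal $q(\cdot)$ of Algorithm \ref{alg:MCMC_sampling_lp} depends only on the pivot $S_{0}$ (fixed after Pass~1) plus a uniform term, the $l$ outer iterations and the $m$ inner random-walk steps can all be interleaved into a single sequential scan of $\mathcal{X}$: I would maintain an orthonormal basis of $\linspan{S_{0} \cup A_{1} \cup \dotsc \cup A_{i-1}}$ via an incremental QR factorization and compute $p(y)$ on the fly as each $y$ is streamed. This constitutes the second and final pass.

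\textbf{Approximation guarantee, subset size, and running time.} Plugging the stated $t, l, m$ into Theorem \ref{thm:alpha-iterations} with the $\alpha$ produced by Pass~1 (the $\alpha$ only appears as a polylogarithmic factor inside $l$ and $m$, both of which remain $O(k \log^{2}(k/\eps)/\eps^{2})$ after substitution) yields the chain of bounds from Lemma \ref{lemma:induction}, Proposition \ref{prop:adaptive-iterations}, and Proposition \ref{prop:l2-vol} and gives $\expectilde{T_{1:l}\given S_{0}}{\err_{2}(\mathcal{X}, S_{0}\cup T_{1:l})} \leq (1+\eps)\,\err_{2}(\mathcal{X}, V^{*})$ by mechanical substitution. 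The output size is $\size{S_{0}} + tl = k + (8k/\eps)\cdot O(\log(1/\eps)/\log(8/\eps)) = O(k \log k / \eps)$. Each random-walk step evaluates $d(y, \linspan{S_{0}})^{p}$ and $d(y, \linspan{S_{0} \cup A_{1} \cup \dotsc \cup A_{i-1}})^{p}$ in $O(d \cdot k \log k/\eps)$ time using the maintained QR factorization; summing over $tlm = \poly(k/\eps)$ steps gives total MCMC time $d \cdot \poly(k/\eps)$.

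\textbf{Main obstacle.} The delicate part is realizing Pass~1 with $\alpha$ small enough that the mixing-time bound $\gamma = 2/\eps_{1}$ from Lemma \ref{lem:tv-dist} (and hence the $m$ forced by Theorem \ref{thm:alpha-iterations}) remains polynomial in $k/\eps$. An $\alpha$ that grows with $n$ or $d$ would inflate $m$ and require more passes; once a single-pass volume sampler with a well-controlled $\alpha$ is in place, the theorem follows by routine substitution into the inequalities already established in Theorem \ref{thm:alpha-iterations}.
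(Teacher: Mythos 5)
Your overall architecture matches the paper's: one pass for $\alpha$-approximate volume sampling to get $S_{0}$, one pass for the pivoted MCMC loop, and then a black-box invocation of Theorem \ref{thm:alpha-iterations}. Pass 2 and the size/running-time bookkeeping are essentially what the paper does (the paper computes the MCMC time as $O(tlm\cdot d(tl)^{2}) = d\cdot\poly(k/\eps)$, matching your QR-based accounting). The gap is in Pass 1, and you have correctly flagged it as the delicate step but not actually closed it. Your claim that a Metropolis--Hastings chain with a squared-length proposal targeting $\mathrm{vol}(\Delta_{S})^{2}$ mixes in $\poly(k/\eps)$ steps because "the squared-length proposal dominates the target ratio by a bounded factor" is false in general. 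By Cauchy--Binet, $\sum_{T:\size{T}=k}\mathrm{vol}(\Delta_{T})^{2}$ is (up to $1/(k!)^{2}$) the elementary symmetric polynomial $e_{k}(\sigma_{1}^{2},\dotsc,\sigma_{d}^{2})$ of the squared singular values, while the normalizer of i.i.d. squared-length sampling is $(\sum_{i}\sigma_{i}^{2})^{k}$. The ratio of the volume-sampling probability to the squared-length proposal therefore carries a factor $(\sum_{i}\sigma_{i}^{2})^{k}/e_{k}(\sigma^{2})$, which blows up as the spectrum degenerates (e.g., $\sigma_{2},\dotsc,\sigma_{d}\to 0$): Hadamard's inequality controls only the other direction. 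So the quantity playing the role of $\gamma$ in the Corollary-1-of-\cite{cai} argument is not $\poly(k)$; it depends on the conditioning of the data, and your asserted $\alpha = O(1)$ or $O(k)$ from a $\poly(k/\eps)$-length walk does not follow.

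The paper handles this differently, and its two routes are worth noting. First, it uses the swap-walk of Anari et al.\ \cite{anari2016monte} over $k$-subsets, which reaches total variation distance $\delta$ from exact volume sampling in $\poly(n,k,\log(1/\delta\,\mathrm{vol}(\Delta_{S})))$ steps; the additive TV error is then absorbed by choosing $\delta = 1/\kappa$ (the reciprocal condition number), giving expected error $(k+2)\,\err_{2}(\mathcal{X},V^{*})$, i.e., effectively $\alpha(k+1)=k+2$ — which is exactly why the theorem's $m$ contains $\log^{2}(2(k+2)/\eps)$ rather than your "polylogarithmic in $\alpha$" reading ($\alpha$ in fact enters $m$ linearly through $\eps_{2}$, not polylogarithmically; it is only tolerable because $\alpha(k+1)=O(k)$ here). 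Second, the paper offers the cleaner one-pass alternative of a $(1+\eps)$ volume-preserving random projection \cite{DBLP:conf/approx/MagenZ08} followed by exact volume sampling in $\poly(k/\eps)$ dimensions. To repair your argument you would need to either adopt one of these two mechanisms or supply an actual mixing-time bound for your proposed chain, which the squared-length-domination heuristic does not provide.
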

\begin{proof}
Anari et al. \cite{anari2016monte} give an MCMC algorithm to sample approximately from volume sampling. They start with any subset $S$, pick $i \in S$ and $j \notin S$ uniformly at random, and perform a lazy random walk over $k$-sized subsets, i.e., move to $T = S \setminus \{i\} \cup \{j\}$ with probability $\frac{1}{2} \min\{1, \frac{\mathrm{vol}(\Delta_{T})^{2}}{\mathrm{vol}(\Delta_{S})^{2}}\}$, and with the remaining probability stay at $S$. They show that in $\text{poly}(n, k, \log(1/\delta \mathrm{vol}(\Delta_{S}))$ steps of the above random walk starting at $S$, the resulting distribution is within $\delta$ total variation distance from the exact volume sampling distribution. Thus, using Proposition \ref{prop:l2-vol} with $\alpha = 1$ and adding the error due to total variation distance, we get that  the expected error $\err_{2}(\mathcal{X}, S)$ at the end of $\text{poly}(n, k, \log(1/\delta \mathrm{vol}(\Delta_{S}))$ steps of the above random walk is at most $(k+1)~ \err_{p}(\mathcal{X}, V^{*}) + \delta \max_{S \suchthat \size{S}=k} \err_{p}(\mathcal{X}, S)$, where $V^{*}$ minimizes $\err_{2}(\mathcal{X}, V)$ over all linear subspaces $V$ of dimension $k$. We choose $\delta$ as follows.
\[
\delta = \frac{1}{\kappa} =  \frac{\sigma_{\min}}{\sigma_{\max}} \leq \frac{\min_{S \suchthat \size{S}=k} \err_{p}(\mathcal{X}, S)}{\max_{S \suchthat \size{S}=k} \err_{p}(\mathcal{X}, S)},
\]
where $\sigma_{\min}$ and $\sigma_{\max}$ are the minimum and the maximum singular values, respectively, of the $n$-by-$d$ matrix whose rows are all the points in $\mathcal{X}$, and $\kappa$ is the condition number. Then the expected error of $\err_{2}(\mathcal{X}, S)$ is at most $(k+2)~ \err_{p}(\mathcal{X}, V^{*})$. Observe that $\min_{S \suchthat \size{S}=k} \mathrm{vol}(\Delta_{S})^{2} \geq \sigma_{\min}^{k}$. Note that for the above choice of $\delta$ and any $k$-sized subset $S \subseteq \mathcal{X}$, we can upper bound $\log(1/\delta \mathrm{vol}(\Delta_{S})^{2})$ by $O(k \log \kappa)$, where $\kappa$ is the condition number of the $n$-by-$d$ matrix whose rows are the points of $\mathcal{X}$. Thus, in total number of $\text{poly}(n, k, \log \kappa)$ steps of the random walk we get an initial subset $S_{0}$ with $\err_{2}(\mathcal{X}, S_{0})$ upper bounded by $(k+2)~ \err_{2}(\mathcal{X}, S^{*})$, in expectation. 

Another alternative to implement approximate volume sampling is to do a $(1+\eps)$ volume-preseving random projection using only one pass over the given data \cite{DBLP:conf/approx/MagenZ08}, and then do exact volume sampling on the projected data in $\poly(k/\eps)$ dimensions. This gives another alternative to obtain the initial subset $S_{0}$. 

The rest of the proof using MCMC sampling algorithm starting from $S_{0}$ using the same subset as a pivot is similar to the proof of Theorem \ref{thm:alpha-iterations}.

The running time of MCMC procedure is $O(tlm \cdot d(tl)^2)=d\cdot\poly(k/\eps)$, where $d\cdot(tl)^2$ is the time required to project a point onto $\mathrm{span}(S)$,  where $S$ is a subset of $O(tl)$ points of dimension $d$ each.

\end{proof}

\subsection{$(k+1)$-pass subset selection for $\ell_{p}$ subspace approximation by MCMC sampling}
Our results can be extended in a similar manner for $p \geq 2$ by combining our MCMC sampling algorithm with the initialization $S_{0}$ given by adaptively sampling $k$ points in $k$ iterations, one point per iteration. Note that the initialization takes $k$ passes over the data.
\begin{theorem}[Theorem $1$ of~\cite{DeshpandeV07}]
Let $S_0$ be the subset of $k$ points sampled using volume sampling stated in line~\ref{line:volume_sampling_lp} of Algorithm~\ref{alg:MCMC_sampling_lp}. Then we have $\expect{S}{\err_{p}(\mathcal{X}, S)^{p}} \leq k! (k+1)^{p}~ \err_{p}(\mathcal{X}, S^{*})$, where $S^{*} = \underset{S \suchthat \size{S} = k}{\operatorname{argmin}}~ \err_{2}(\mathcal{X}, S)$.

\end{theorem}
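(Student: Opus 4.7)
The statement is essentially Theorem 1 of~\cite{DeshpandeV07} specialized to the $\alpha$-approximate $\ell_p$-volume sampling defined in line~\ref{line:volume_sampling_lp} of Algorithm~\ref{alg:MCMC_sampling_lp}, with $\alpha = k!$ arising from the use of adaptive sampling to realize the volume sampling. (Note that as stated the RHS should read $k!(k+1)^p\,\err_p(\mathcal{X}, V^*)^p$ where $V^*$ is the optimal $k$-dimensional subspace; the exponent and the $\ell_2$ in the definition of $S^*$ look like typos.) My plan is therefore to redo the expected-error calculation for $\alpha$-approximate volume sampling and then plug in $\alpha = k!$.

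The first part is algebraic. Starting from $\E_{S_0}[\err_p(\mathcal{X}, S_0)^p] = \sum_{|S|=k}\Pr[S_0=S] \sum_{x\in\mathcal{X}} d(x, \linspan{S})^p$ and using the upper bound on $\Pr[S_0 = S]$ from line~\ref{line:volume_sampling_lp}, I would replace the distance by a volume ratio via the identity
\[
d(x, \linspan{S})\cdot \mathrm{vol}(\Delta_S) \;=\; (k+1)\,\mathrm{vol}(\Delta_{S\cup\{x\}}),
\]
which holds because adding a vertex at (signed) height $d(x,\linspan{S})$ scales the simplex volume by $d(x,\linspan{S})/(k+1)$, and both sides vanish when $x \in \linspan{S}$. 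Raising to the $p$-th power, the factor $\mathrm{vol}(\Delta_S)^p$ cancels with the one in the sampling probability. Swapping sums and noting that each $(k+1)$-subset $U$ arises as $S\cup\{x\}$ in exactly $k+1$ ways yields
\[
\E_{S_0}[\err_p(\mathcal{X}, S_0)^p] \;\leq\; \alpha\,(k+1)^{p+1}\,\frac{\sum_{|U|=k+1}\mathrm{vol}(\Delta_U)^p}{\sum_{|T|=k}\mathrm{vol}(\Delta_T)^p}.
\]

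The second part, and the main obstacle, is the purely geometric inequality
\[
\sum_{|U|=k+1}\mathrm{vol}(\Delta_U)^p \;\leq\; \frac{1}{k+1}\,\err_p(\mathcal{X}, V^*)^p\,\sum_{|T|=k}\mathrm{vol}(\Delta_T)^p,
\]
valid for the optimal $k$-subspace $V^*$. To prove it I would decompose each vertex $u$ of a candidate $(k+1)$-simplex as $u = P_{V^*}u + u^\perp$ and multilinearly expand the $(k+1)$-volume; since $\dim V^* = k$, only terms containing at least one $u^\perp$ survive, and after invoking a Minkowski-type inequality (Cauchy--Binet for $p=2$, a symmetric-function inequality of the form used in~\cite{DeshpandeV07} for general $p\ge 1$) one obtains a bound of the form $\mathrm{vol}(\Delta_U)^p \leq \frac{1}{(k+1)^p}\sum_{u\in U} d(u, V^*)^p\,\mathrm{vol}(P_{V^*}\Delta_{U\setminus\{u\}})^p$. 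Summing over $U$, swapping sums, and using that projected $k$-subset volumes inside the fixed subspace $V^*$ sum to a quantity controlled by $\sum_{|T|=k}\mathrm{vol}(\Delta_T)^p$, gives the desired ratio bound. This is where all the technical work lives; the rest is bookkeeping on constants.

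Plugging the ratio bound back in yields $\E[\err_p(\mathcal{X}, S_0)^p] \leq \alpha(k+1)^p\,\err_p(\mathcal{X}, V^*)^p$. Since Algorithm~\ref{alg:MCMC_sampling_lp} initializes $S_0$ via $k$ rounds of one-point adaptive sampling, Proposition~1 of~\cite{DBLP:conf/approx/DeshpandeV06} guarantees $\alpha = k!$, delivering the claimed $k!(k+1)^p$ factor. The hard part is the general-$p$ volume inequality in the middle step; everything else is a straightforward consequence of the distance-to-volume identity and reindexing the sums over $k$- and $(k+1)$-subsets.
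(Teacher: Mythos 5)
First, a point of reference: the paper does not prove this statement at all --- it is imported verbatim (with some notational damage) as Theorem~1 of~\cite{DeshpandeV07}, so there is no in-paper proof to compare against, and your attempt is a from-scratch reconstruction. Your first step is correct and standard: the identity $d(x,\linspan{S})\,\mathrm{vol}(\Delta_S)=(k+1)\,\mathrm{vol}(\Delta_{S\cup\{x\}})$, the cancellation of $\mathrm{vol}(\Delta_S)^p$ against the sampling weight, and the $(k+1)$-fold overcounting of $(k+1)$-subsets together give $\E\left[\err_p(\mathcal{X},S_0)\right]\le \alpha(k+1)^{p+1}\sum_{|U|=k+1}\mathrm{vol}(\Delta_U)^p\big/\sum_{|T|=k}\mathrm{vol}(\Delta_T)^p$. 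Your diagnosis of the typos in the statement (the spurious outer $p$-th power relative to this paper's definition of $\err_p$, and the $\err_2$ in the definition of $S^*$) is also right, as is the final bookkeeping that $\alpha=k!$ when $S_0$ comes from $k$ rounds of single-point adaptive sampling.

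The gap is in your second step, which is where the entire content of the theorem lives. The per-simplex inequality you propose, $\mathrm{vol}(\Delta_U)^p \le \tfrac{1}{(k+1)^p}\sum_{u\in U}d(u,V^*)^p\,\mathrm{vol}(P_{V^*}\Delta_{U\setminus\{u\}})^p$, is false. Take $V^*=\linspan{e_1,\dotsc,e_k}$ (which can be forced to be optimal by placing heavy points along those axes) and $U=\{e_{k+1},\dotsc,e_{2k+1}\}$: then $\mathrm{vol}(\Delta_U)=1/(k+1)!>0$, while every projected sub-simplex $P_{V^*}\Delta_{U\setminus\{u\}}$ is degenerate, so your right-hand side is $0$. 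The multilinear expansion of the $(k+1)$-volume contains cross terms with two or more perpendicular components $u^\perp$, and these are exactly what your bound discards; they can dominate. What is actually needed is the aggregate inequality $\sum_{|U|=k+1}\mathrm{vol}(\Delta_U)^p \le c_{k,p}\,\err_p(\mathcal{X},V^*)\sum_{|T|=k}\mathrm{vol}(\Delta_T)^p$, which does hold, but for $p=2$ its proof goes through Cauchy--Binet at the level of the sums (via $e_{k+1}/e_k\le\sum_{i>k}\sigma_i^2$ for elementary symmetric polynomials of the squared singular values), not simplex by simplex, and for general $p$ there is no Cauchy--Binet: \cite{DeshpandeV07} instead prove the existence of $k$ points spanning a $(k+1)$-factor approximation by a separate averaging argument over the optimal subspace. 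As written, your ``Minkowski-type inequality'' step is a placeholder for precisely the theorem being proved, so the argument is circular at its core.
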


\section{$\ell_{2}$ subspace approximation with outliers in two passes by MCMC sampling}
We further extend our results for subspace approximation with outliers problem, which is essentially finding the optimal subspace only over the inliers.  We define it formally  in Section~\ref{sec:background}. Our result is an improvement of Theorem $12$ (for $p=2$) of~\cite{DBLP:journals/tcs/DeshpandeP21} in the sense that we reduce the number of passes required by the adaptive sampling. Their result requires one assumption on the input, and we state it as follows. Let $\mathcal{X}=\{x_i\}_{i=1}^n$ be a set of $n$ points in $d$-dimensional space, and $S^{*}_I$ is the optimal $k$-dimensional subspace over inlier set (denoted as $I$), then 
\begin{align}\label{eq:assumption}
\frac{\sum_{i \in I} d(x_{i}, S^{*}_I)^{2}}{\sum_{i=1}^{n} d(x_{i},S^{*}_I)^{2}}\geq \lambda, \qquad \text{where~} \lambda\in (0, 1), \text{~is a constant.}
\end{align}

We state our result as follows:

\begin{theorem} \label{thm:multi-line}
For any given set of points $\mathcal{X}=\{x_i\}_{i=1}^n \in \R^{d}$, let $I \subseteq [n]$ be the set of optimal $(1-\beta)n$ inliers and $S^{*}_I$ be the optimal subspace over inliers that minimizes their squared distance. Let the input satisfy the condition stated in Equation~\eqref{eq:assumption}, and $k$ be a positive integer, and $\eps\in(0, 1).$ Then Algorithm \ref{alg:MCMC_sampling_lp} can be implemented in $2$ passes over $\mathcal{X}$, with $\alpha$-approximate volume sampling to pick a subset $S_{0}$ of size $k$ taking one pass over $\mathcal{X}$ and the remaining MCMC procedure with $S_{0}$ as its pivot subset taking another pass. Setting $t, l, m$ as $t = O(k/\eps)$, $l = O(\log(\alpha k/\eps\lambda)/\log(1/\eps))$ and $m \geq 1 + O( (\alpha k/\lambda) \log^{2}(\alpha k/\eps\lambda)/\eps^{2} \log(1/\eps))$, the algorithm picks a subset of $tl = O(k^2 \log k/\eps^2)$ points given by $T_{1:l} = A_{1} \cup A_{2} \cup \dotsc \cup A_{l}$ such that
\[
\expectilde{T_{1:l} \given S_{0}}{\sum_{j\in N_{\beta}(S_{0} \cup T_{1:l})}d(x_j, S_{0} \cup T_{1:l})^2} \leq (1+\eps)~ \sum_{i\in I}d(x_i, S^{*}_I)^2, 
\]
  \text{where $S^{*}_I = \underset{S \suchthat \size{S} = k}{\operatorname{argmin}}~ \sum_{i\in I}d(x_i, S)^2$}, and $N_{\beta}(S_{0} \cup T_{1:l}) \subseteq [n]$ denotes the set  of the indices of the nearest $(1-\beta)n$ points to $(S_{0} \cup T_{1:l})$ among $x_{1}, x_{2}, \dotsc, x_{n}$.  This gives a near-optimal $O(k^2 \log k/\eps^2)$-sized subset selection with $(1+\eps)$ approximation guarantee over inliers in only $2$ passes over $\mathcal{X}$.
\end{theorem}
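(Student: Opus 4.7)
The strategy is to mirror the proof of Theorem~\ref{thm:l2-algo}, with only two genuinely new ingredients: (i) compare the algorithm's expected total error against $\err_{2}(\mathcal{X}, S^{*}_{I})$ instead of $\err_{2}(\mathcal{X}, V^{*})$, and (ii) use assumption~\eqref{eq:assumption} to translate a total-error guarantee into an inlier-error guarantee. The two passes themselves are unchanged: the first pass produces $S_0$ by $\alpha$-approximate volume sampling (as in Theorem~\ref{thm:l2-algo}, either via the Anari~et~al.\ random walk or via a one-pass volume-preserving projection), and the second pass implements the MCMC random walk of Algorithm~\ref{alg:MCMC_sampling_lp} with $S_0$ as the pivot.

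The analytical chain proceeds in the same order as in Theorem~\ref{thm:l2-algo}. First, invoke Lemma~\ref{lemma:induction} to bound
\[
\expectilde{T_{1:l}\given S_0}{\err_2(\mathcal{X}, S_0\cup T_{1:l})}
\;\le\;
\expect{T_{1:l}\given S_0}{\err_2(\mathcal{X}, S_0\cup T_{1:l})}
+(\eps_1+\eps_2\,tl)\,\err_2(\mathcal{X},S_0),
\]
valid as long as $m\ge 1+(2/\eps_2)\log(1/\eps_1)$. Next, apply Proposition~\ref{prop:adaptive-iterations} \emph{with $V^{*}$ replaced by $S^{*}_{I}$} (the proposition does not require $V^{*}$ to be the global minimizer; the inequality holds for any $k$-dimensional subspace) to get
\[
\expect{T_{1:l}\given S_0}{\err_2(\mathcal{X}, S_0\cup T_{1:l})}
\;\le\;
\left(1+\tfrac{2k}{t}\right)\err_2(\mathcal{X},S^{*}_{I})+\left(\tfrac{k}{t}\right)^{l}\err_2(\mathcal{X},S_0).
\]
Then bound $\E[\err_2(\mathcal{X},S_0)]$ via Proposition~\ref{prop:l2-vol} (using $\err_2(\mathcal{X},V^{*})\le\err_2(\mathcal{X},S^{*}_{I})$) to get $\E[\err_2(\mathcal{X},S_0)]\le \alpha(k+1)\,\err_2(\mathcal{X},S^{*}_{I})$. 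Finally, use assumption~\eqref{eq:assumption} in the form $\err_2(\mathcal{X},S^{*}_{I})\le \tfrac{1}{\lambda}\,\sum_{i\in I}d(x_i,S^{*}_{I})^{2}$ and the trivial inequality
\[
\sum_{j\in N_{\beta}(S_0\cup T_{1:l})}d(x_j, S_0\cup T_{1:l})^{2}\;\le\;\err_2(\mathcal{X},S_0\cup T_{1:l})
\]
to convert the total-error bound into an inlier-error bound.

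With these four inequalities composed, the target inlier error is at most
\[
\left[\left(1+\tfrac{2k}{t}\right)+(\eps_1+\eps_2 tl)\alpha(k+1)+\left(\tfrac{k}{t}\right)^{l}\alpha(k+1)\right]\frac{1}{\lambda}\sum_{i\in I}d(x_i,S^{*}_{I})^{2}.
\]
Setting $t=8k/\eps$ makes the first bracket $1+O(\eps)$. Setting $l$ so that $(k/t)^{l}\alpha(k+1)\le\eps\lambda$ — which gives exactly $l=\Theta(\log(\alpha k/\eps\lambda)/\log(1/\eps))$ as stated — makes the tail term $O(\eps/\lambda)$ when divided by $\lambda$, i.e.\ $O(\eps)$ after cancellation. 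Setting $\eps_1,\eps_2\le \eps\lambda/(\alpha(k+1)tl)$ makes the MCMC slack term also $O(\eps)$ after the $1/\lambda$ is absorbed; the resulting random-walk length $m=1+(2/\eps_2)\log(1/\eps_1)$ matches the stated $m\ge 1+O((\alpha k/\lambda)\log^{2}(\alpha k/\eps\lambda)/(\eps^{2}\log(1/\eps)))$.

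\textbf{Main obstacle.} The delicate step is bookkeeping the factor of $1/\lambda$ that appears when we convert $\err_2(\mathcal{X},S^{*}_{I})$ into $\sum_{i\in I}d(x_i,S^{*}_{I})^{2}$ via~\eqref{eq:assumption}. The multiplicative $(1+2k/t)$ in Proposition~\ref{prop:adaptive-iterations} needs $\lambda$ absorbed into $\eps$ for the final $(1+\eps)$ bound, and the tail term $(k/t)^{l}\err_2(\mathcal{X},S_0)$ together with the MCMC slack $(\eps_1+\eps_2 tl)\err_2(\mathcal{X},S_0)$ must be driven down by a factor of $\lambda$ compared to the analysis in Theorem~\ref{thm:l2-algo}. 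This is why $\lambda$ enters $l$ logarithmically and $m$ polynomially in the parameter choices, while $t$ stays $O(k/\eps)$. Once the parameters are set correctly, the remainder of the proof is a straightforward substitution that parallels the final calculation in the proof of Theorem~\ref{thm:l2-algo}, and the $|T_{1:l}|=tl=O(k^{2}\log k/\eps^{2})$ sample complexity and two-pass implementation follow immediately.
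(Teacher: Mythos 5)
There is a genuine gap in the final composition step, and it sits exactly where you flag the "main obstacle." Your chain bounds the output's inlier cost by the \emph{total} cost $\err_{2}(\mathcal{X}, S_{0}\cup T_{1:l})$, bounds that in expectation by roughly $\bigl(1+\tfrac{2k}{t}\bigr)\err_{2}(\mathcal{X}, S^{*}_{I})$ plus small additive terms, and only then invokes Equation~\eqref{eq:assumption} via $\err_{2}(\mathcal{X}, S^{*}_{I}) \leq \tfrac{1}{\lambda}\sum_{i\in I} d(x_{i}, S^{*}_{I})^{2}$. But that $\tfrac{1}{\lambda}$ multiplies the \emph{leading} term, not just the error terms, so the best you can conclude is an approximation factor of $\tfrac{1}{\lambda}(1+O(\eps))$ over the inliers --- e.g.\ a factor $2(1+O(\eps))$ when $\lambda = 1/2$ --- not $(1+\eps)$. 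No choice of $t, l, m$ repairs this: shrinking $\eps_{1}, \eps_{2}$ and growing $l$ absorbs $1/\lambda$ into the \emph{additive} slack terms $(k/t)^{l}\,\err_{2}(\mathcal{X},S_{0})$ and $(\eps_{1}+\eps_{2}tl)\,\err_{2}(\mathcal{X},S_{0})$, but cannot absorb the $1/\lambda$ that multiplies the ``$1$'' in $\bigl(1+\tfrac{2k}{t}\bigr)$. This is also not a fixable bookkeeping issue within your framework: a bound on total error alone cannot certify a $(1+\eps)$ bound on inlier error, because the outlier contribution to $\err_{2}(\mathcal{X}, S_{0}\cup T_{1:l})$ need not dominate the outlier contribution to $\err_{2}(\mathcal{X}, S^{*}_{I})$.

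The paper's own proof is a citation: it invokes Theorems 11 and 12 of \cite{DBLP:journals/tcs/DeshpandeP21} together with Theorem~\ref{thm:alpha-iterations}. The essential content of those imported theorems, which your argument is missing, is an adaptive-sampling recurrence whose \emph{leading} term is already the inlier error of the comparison subspace --- schematically, the expected value of $\sum_{i\in I} d(x_{i}, \linspan{S\cup T})^{2}$ is at most $\sum_{i\in I} d(x_{i}, S^{*}_{I})^{2}$ plus $\tfrac{k}{t}$ times the \emph{total} error of $S$ --- so that Equation~\eqref{eq:assumption} is only ever applied to the geometrically decaying and additive terms, where the $1/\lambda$ can be cancelled by running with an effective accuracy parameter $\eps\lambda$ (this is precisely why $\lambda$ appears in $l$ and $m$ but not in the leading constant). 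You would then finish with the sharper inequality $\sum_{j\in N_{\beta}(S)} d(x_{j}, S)^{2} \leq \sum_{i\in I} d(x_{i}, S)^{2}$ (the nearest $(1-\beta)n$ points cost no more than the specific set $I$), rather than the lossier comparison to $\err_{2}(\mathcal{X}, S)$. Everything else in your outline --- the two-pass structure, the use of Lemma~\ref{lemma:induction} to replace adaptive rounds by MCMC, and the parameter calibration for the slack terms --- matches the paper's intent and would survive once the recurrence is replaced by the inlier-aware one.
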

\begin{proof}
Follows  from Theorems $11, 12$ of~\cite{DBLP:journals/tcs/DeshpandeP21} and Theorem~\ref{thm:alpha-iterations} of this paper. 
\end{proof}

\section{Conclusion}
We improve upon the previous work on sampling-based subspace approximation and subset selection algorithms that require adaptive sampling, and hence, multiple passes over the given data. This renders some of the algorithms based on adaptive sampling less practical on large data, especially if the number of passes required depends on the target dimension $k$. Our MCMC sampling algorithm can significantly reduce the number of passes required in the various applications of adaptive sampling for $\ell_{p}$ subspace approximation and subset selection.


\bibliographystyle{plain}
\bibliography{reference} 
\end{document}